\newtheorem{theorem}{Theorem}[section]
\newtheorem{LMA}[theorem]{Lemma}
\newcommand{\com}[1]{\ifnum\count13<1 #1 \fi}
\def\squarebox#1{\hbox to #1{\hfill\vbox to #1{\vfill}}}
\def\qed{\hspace*{\fill}%
        \vbox{\hrule\hbox{\vrule\squarebox{.667em}\vrule}\hrule}\smallskip}
\newenvironment{proof}{\begin{trivlist}
\item[\hspace{\labelsep}{\em\noindent Proof.~}]}{\qed\end{trivlist}}
\def\squarebox#1{\hbox to #1{\hfill\vbox to #1{\vfill}}}
\def\qed{\hspace*{\fill}%
        \vbox{\hrule\hbox{\vrule\squarebox{.667em}\vrule}\hrule}\smallskip}
\newenvironment{proofsketch}{\begin{trivlist}
\item[\hspace{\labelsep}{\em\noindent Proof(sketch).~}]}{\qed\end{trivlist}}
\begin{document}

\title{
	Online interval scheduling to maximize total satisfaction
}

\author{
	Koji M. Kobayashi
}

\date{}

\setlength{\baselineskip}{4.85mm}
\maketitle

\begin{abstract}
	\ifnum \count10 > 0
	%
	%
	%
	%
	%
	{\em 
	$m$
	%
	%
	%
	1
	%
	%
	%
	%
	%
	%
	%
	%
	%
	%
	$m = 2$
	%
	%
	
	%
	\fi
	\ifnum \count11 > 0
	%
	%
	The interval scheduling problem is one variant of the scheduling problem.  
	In this paper, 
	we propose a novel variant of the interval scheduling problem, 
	whose definition is as follows: 
	given jobs are specified by their {\em release times}, {\em deadlines} and {\em profits}. 
	An algorithm must start a job at its release time on one of $m$ identical machines,  
	and continue processing until its deadline on the machine to complete the job. 
	All the jobs must be completed and the algorithm can obtain the profit of a completed job as a user's satisfaction. 
	It is possible to process more than one job at a time on one machine. 
	The profit of a job is distributed uniformly between its release time and deadline, that is its interval, 
	and the profit gained from a subinterval of a job decreases 
	in reverse proportion to the number of jobs whose intervals intersect with the subinterval on the same machine. 
	The objective of our variant is to maximize the total profit of completed jobs. 
	This formulation is naturally motivated by best-effort requests and responses to them, 
	which appear in many situations. 
	In best-effort requests and responses, 
	the total amount of available resources for users is always invariant 
	and the resources are equally shared with every user. 
	%
	%
	We study online algorithms for this problem. 
	Specifically, 
	we show that for the case where the profits of jobs are arbitrary, 
	there does not exist an algorithm whose competitive ratio is bounded. 
	Then, 
	we consider the case in which the profit of each job is equal to its length, 
	that is, 
	the time interval between its release time and deadline. 
	For this case, 
	we prove that for $m = 2$ and $m \geq 3$, 
	the competitive ratios of a greedy algorithm are at most $4/3$ and at most $3$, respectively. 
	Also, 
	for each $m \geq 2$, 
	we show a lower bound on the competitive ratio of any deterministic algorithm. 
	\fi
\end{abstract}
%


\section{Introduction} \label{Intro}
\ifnum \count10 > 0
%
%
{\em interval scheduling} problem
%
%
%
$m \geq 1$
$n \geq 1$
%
1
%
%
%
%
%
%
%
{\em 
%
%
$A$
%

%
interval scheduling problem
%
%
%
1
%
%
{\em 
%
%
%
%
{\em 
%
%
%
%
1
%
%
%
job
%
%
%
%
%
%
%
%
%
%
%
%
%
%
%
%
%
%

%
\fi
\ifnum \count11 > 0
%
%
The interval scheduling problem is one of the variants of the scheduling problem, 
which has been widely studied. 
One of the most basic definitions is as follows: 
We have $m \geq 1$ identical machines and jobs are given. 
A job is characterized by the {\em release time}, {\em deadline} and {\em weight} (or {\em value}). 
%
To {\em complete} a job, 
we must start to process it at its release time on a machine of the $m$ machines, 
and continue processing it until its deadline on that machine. 
%
That is, 
the {\em processing time} (or {\em length}) of the job is the time interval between its release time and deadline. 
%
The number of jobs which can be processed on one machine at a time is at most one. 
The objective of an algorithm is to maximize the total weight of completed jobs. 
There are many applications of the interval scheduling problem, 
such as bandwidth allocation and vehicle assignment 
(see e.g., \cite{KLPS2007,KNC2007}). 
Many variants of this problem have been proposed and extensively studied. 
Furthermore, 
research on online settings has also been considered. 
In an online variant of the interval scheduling problem, 
a job arrives at its release time and 
an online algorithm must decide whether it processes the job before the next job arrives. 
%
The performance of online algorithms is evaluated using {\em competitive analysis} \cite{AB98,DS85}. 
For any input, 
if the total weight gained by an optimal offline algorithm is at most $c$ times that gained by an online algorithm, 
the online algorithm is {\em $c$-competitive}. 
%

%
In this paper, 
we introduce a novel variant of the interval scheduling problem. 
In many existing variants of the interval scheduling problem, 
jobs (or users) require resources for an algorithm, 
and the algorithm assigns the required resources of a machine to the job. 
Thus, 
the number of jobs assigned to one machine at a time is subject to the maximum amount of resources of the machine. 
The amount is generally one; that is, 
at most one job can be processed at a time on one machine in most variants. 
%
%
Therefore, 
we can regard such existing variants as formulating {\em resource reservation} requests by users, 
who designate the amount of resources they want to use in advance and the responses to them. 
However, 
it is not always possible for users to designate the amount of resources they want 
when they issue requests.
Additionally, 
there are not necessarily sufficient resources of a machine to meet users' requests. 
%
%
Thus, 
we focus on a {\em best-effort} method to manage situations, 
which is often considered paired with resource reservation methods. 
%
%
In this method, 
the amount of resources of a machine is always invariant and 
the resources are equally shared by users 
who want to use the resources at the same time. 
%
%
Then, 
we formulate best-effort requests and responses to them as a variant of the interval scheduling problem. 
Specifically, 
we remove the capacity constraints from machines in our variant, 
which makes it possible to assign jobs unlimitedly on one machine at a time. 
To the best of our knowledge, 
this is the first such formulation of the interval scheduling problem. 
this is the first formulation as the interval scheduling problem. 
A suggestion "the first such formulation of the interval scheduling problem."
%
%
Consider a given job as a user's request. 
If a machine processes the request using sufficient resources, 
the user is sufficiently satisfied with the result obtained from the process. 
%
Conversely, 
if there are not sufficient resources to process the request, 
the user is less satisfied with the result than usual. 
%
%
Then, 
the objective of our variant is to maximize the total satisfaction gained by users.
%
%
Bandwidth allocation in networks is one of the most suitable examples for best-effort requests and responses. 
%
%
In this example, 
the total bandwidth which may be supplied to users on the same communication link is fixed in advance, 
and all users share the bandwidth. 
%
Hence, 
the fewer users which use the communication link at a time, 
the greater the bandwidth which each one can use, 
which means that the effective speed of the communication link is higher for the users. 
Conversely, 
the more people there are using link simultaneously, 
the lower the effective speed for each user. 
As a result, 
if the bandwidth for a user is high, 
then the user's satisfaction is high. 
Otherwise, it is low. 
Best-effort requests and responses such as bandwidth allocation could happen in many cases, 
%
for example, 
the use of facilities, such as swimming pools and gyms, 
passenger trains without reservations, and 
buffet style meals. 
%
%
Therefore, 
we have sufficient incentives to study our variant. 
\fi
\ifnum \count10 > 0
%
%
{\bf 
%
%
%
%
%
%
{\em uniform profit case}
%
%
1
%
%
%
$GR$
%
$m \geq 3$
$GR$
%
$m=2$
$GR$
%
%

%
\fi
\ifnum \count11 > 0
%
%
{\bf Our Results.}~
In this paper, 
we propose and analyze a novel variant of the interval scheduling problem. 
%
%
%
We study online algorithms for this problem. 
Specifically, 
in the case where the profits of jobs are arbitrary; 
that is, 
the profits are not relevant to the lengths of jobs, 
we show that the competitive ratio of any deterministic algorithm is unbounded. 
Then, 
we introduce the profits of jobs are equal to their lengths, 
which is a more natural case, 
called the {\em uniform profit case}. 
In this case, 
the total amount of time during which at least one job is scheduled on a machine is equal to the total amount of the satisfaction gained on the machine. 
That is, 
the objective of this case can be regarded as maximizing the working hours of all the machines.  
We analyze the performance of a greedy algorithm $GR$ in this case. 
Since $GR$ is a significant algorithm from a practical point of view, 
it is worthwhile to evaluate its performance. 
When $m = 2$ and $m \geq 3$, 
we show that the competitive ratios of $GR$ are at most $4/3$ and at most $3$, respectively. 
When $m=2$, 
we prove that a lower bound on the competitive ratio of $GR$ is $4/3$. 
That is, 
for $m = 2$, our analysis of $GR$ is tight. 
Also, 
we show lower bounds of any deterministic online algorithms for each $m \geq 2$,  
which are summarized in Table~\ref{tab:OurRes} and Table~\ref{tab:thm:VL.LBm} in Sec.~\ref{sec:lbuc}. 
\fi

\begin{table*}
\begin{center}
	\renewcommand{\arraystretch}{1.5}
	\caption{Our Results}
	\begin{tabular}{l|l|l}
		\hline
			$m$ & Upper bound & Lower bound \\
		\hline
			2 & $4/3 \leq 1.334$ & $(10 - \sqrt{2}) / 7 \geq 1.226$ \\
		\hline
        	3 & \multirow{5}{50pt}{ $3$ } & $7/6 \geq 1.166$ \\ 
		\cline{1-1}
		\cline{3-3}
    		4 && $(22-2\sqrt{2})/ 17 \geq 1.127$ \\ 
		\cline{1-1}
		\cline{3-3}
			5 && $(420-15\sqrt{7})/ 333 \geq 1.142$ \\ 
		\cline{1-1}
		\cline{3-3}
			6 && $(51-6\sqrt{2})/ 41 \geq 1.140$ \\
		\cline{1-1}
		\cline{3-3}
			$\infty$ && $(48 - 2 \sqrt{2})/41 \geq 1.101$ \\
		\hline
	\end{tabular}
	\label{tab:OurRes}
\end{center}
\end{table*}

\ifnum \count10 > 0
%
%
{\bf 
interval scheduling problem
%
Arkin
%

%
%
%
%
preempt
complete
Faigle and Nawijn~\cite{FN1995}
1-competitive
Carlisle
preempt
complete
Woeginger~\cite{GJW1994}
Canetti
$\Delta$
%
%
%
%
Woeginger~\cite{GJW1994}
%
%
%
$m = 1$
Fung
Epstein
%
%
$m \geq 2$
Fung
$m$
$m$
%
$m = 2$
$m \geq 3$
Epstein
Fung
preemption
Lipton
%

%
%
{\em slack} $\varepsilon > 0$
$x = 1/(1 + \varepsilon)$
slack
%
%
preemption
%
%
%

%

%
\fi
\ifnum \count11 > 0
%
%
{\bf Related Results.}~
Much research on the interval scheduling problem has been conducted. 
%
Arkin and Silverberg~\cite{AS1987} and Bouzina and Emmons~\cite{BE1996} provided polynomial time algorithms to solve the interval scheduling problem. 
There is also much research on online interval scheduling problems. 
If an online algorithm aborts a job $J$ which was placed on a machine, 
then we say that the algorithm {\em preempts} $J$. 
In the case in which preemption is allowed, 
Faigle and Nawijn~\cite{FN1995} designed a 1-competitive algorithm to maximize the number of completed jobs. 
This algorithm was independently discovered by Carlisle and Lloyd~\cite{CL1995} but used only for the offline setting. 
Moreover, 
for the variant in which the objective is to maximize the total weight of completed jobs, 
Woeginger~\cite{GJW1994} showed that no any competitive deterministic algorithm exists (even) for $m=1$. 
%
%
Canetti and Irani~\cite{CI1998} provided a randomized online algorithm whose competitive ratio is $O(\log \Delta)$ 
and proved that a lower bound on the competitive ratio of any randomized algorithm is $\Omega( \sqrt{ \log \Delta / \log \log \Delta } )$, 
where $\Delta$ is the ratio of the longest length to the shortest length. 
This result indicates that the competitive ratio of an online algorithm may become worse depending on a given input 
even if it is supported by randomization. 
Additionally, 
the setting in which the jobs are unit length has been extensively studied. 
%
For the one machine setting, 
Woeginger~\cite{GJW1994} designed a deterministic algorithm 
whose competitive ratio is at most $4$ and showed that this is the best possible ratio. 
%
%
There has also been much work regarding randomized algorithms (e.g. \cite{SS1998,ME2004,FPZ2008,EL2010,FPY2012,FPZ2014}).
When $m=1$, 
the current best upper and lower bounds on the competitive ratios of randomized algorithms are $2$ by Fung et~al.~\cite{FPZ2014} and $1 + \ln 2 \geq 1.693$ by Epstein and Levin~\cite{EL2010}, respectively. 
For $m \geq 2$, 
Fung et~al.~\cite{FPY2012} proved that, 
if $m$ is even, an upper bound is $2$, 
and otherwise $2 + 2/(2m-1)$. 
However, 
for $m = 2$, 
the current best lower bound is $2$ by Fung et~al.~\cite{FPZ2008}. 
When each $m \geq 3$, 
Fung et~al.~\cite{FPY2012} indicated that 
we can obtain a lower bound of $1 + \ln 2 \geq 1.693$ in a similar manner to the lower bound of Epstein and Levin~\cite{EL2010}. 
%
%
If preemption is not allowed, 
Lipton and Tomkins~\cite{LT1994} proposed a randomized algorithm whose competitive ratio is $O((\log \Delta)^{1+\epsilon})$ 
and proved that a lower bound of any randomized algorithms is $\Omega(\log \Delta)$. 
For a job given in the interval scheduling problem, 
its length is equal to the length of the time between its release time and deadline. 
On the other hand, 
a variant in which the job length is generalized has also been studied. 
Specifically, 
a parameter {\em slack} $\varepsilon > 0$ is introduced, 
whose value is known to an algorithm in advance, 
and the length of a job is at most $x$ times as long as the length of the time between its release time and deadline, 
in which $x = 1/(1 + \varepsilon)$. 
In this variant, 
preemption is allowed and 
to complete a job, 
an algorithm must process it during its length by its deadline after its release time. 
For several $m$, 
optimal online algorithms were designed \cite{BH1997,DP2001,SZ2016}, 
whose competitive ratios are $1 + 1/\varepsilon$. 
\fi
%

\section{Model Description} \label{sec:model}
\ifnum \count10 > 0
%
%
$m (\geq 2)$
{\em 
%
%
%
$r(J)$
$d(J)$
$v(J)$
%
%
%
1
%
%
%
%
$ALG$
%
%
%
%
$ALG$
$m_{ALG}(J) = i$
%
%
$ALG$
$r(J) = x_1 < x_2 < \cdots < x_b = d(J)$
%
%
$ALG$
\[
	V_{ALG}(J, i) = \frac{ x_{i+1} - x_{i} }{ d(J) - r(J) } \frac{ v(J) }{ k_{ALG}(m_{ALG}(J), x_{i}, x_{i+1}) } 
\]
%
%
$ALG$
\[
	V_{ALG}(J) = \sum_{ i = 1 }^{ b-1 } V_{ALG}(J, i)
\]
%
%
\[
	V_{ALG}(\sigma) = \sum_{ J \in {\cal L} } V_{ALG}(J)
\]
%
%
%

%
%
$n$
%
%
%
%
%
%
%
%
%
%
$A$
%

%
\fi
\ifnum \count11 > 0
%
%
We have $m (\geq 2)$ identical machines. 
A list consisting of $n (\geq 1)$ jobs is provided as an {\em input}. 
%
%
A job $J$ is specified by a triplet $(r, d, v)$, where
%
$r(J)$ is the {\em release time} of $J$, 
$d(J)$ is the {\em deadline} of $J$, 
and 
$v(J)$ is the {\em profit} of $J$. 
An algorithm $ALG$ must place each job onto one of the $m$ machines. 
It is possible to place more than one job at a time on one machine. 
The profit of a job is distributed uniformly between its release time and deadline, that is its interval, 
and the profit gained from a subinterval of a job decreases 
in reverse proportion to the number of jobs whose intervals intersect with the subinterval on the same machine. 
Specifically, 
the profit from the subinterval is defined as follows: 
For an algorithm $ALG$, 
if the numbers of jobs placed at any two points in an interval $(x, y) \hspace{1mm} (x < y)$ are equal on $ALG$'s $a (\in [1, m])$th machine 
and $(x, y)$ does not contain any endpoint of the interval of a job placed on the machine 
after processing of the input, 
then we call the interval a {\em $P$-interval} on $ALG$'s $a$th machine. 
Also, 
let $k_{ALG}(a, x, y)$ denote the number of the jobs. 
If an algorithm $ALG$ places a job $J$ onto the $a$th machine, 
then we define $m_{ALG}(J) = a$. 
%
%
For an algorithm $ALG$ and a job $J$, 
suppose that the interval $(r(J), d(J))$ consists of $b (\geq 1)$ $P$-intervals $(x_{i}, x_{i+1}) \hspace{1mm} (i = 1, \ldots, b-1)$ on $ALG$'s $m_{ALG}(J)$th machine 
such that 
$r(J) = x_1 < x_2 < \cdots < x_b = d(J)$. 
Then, 
we define the satisfaction ({\em profit}) which is yielded from $[x_{i}, x_{i+1}]$ of $J$ and $ALG$ gains as 
\[
	V_{ALG}(J, i) = \frac{ x_{i+1} - x_{i} }{ d(J) - r(J) } \frac{ v(J) }{ k_{ALG}(m_{ALG}(J), x_{i}, x_{i+1}) } 
\]
(see an example in Fig.~\ref{fig:modelex}). 
We define the satisfaction ({\em profit}) of $J$ gained by $ALG$ as  
\[
	V_{ALG}(J) = \sum_{ i = 1 }^{ b-1 } V_{ALG}(J, i). 
\]
The {\em profit} of $ALG$ for an input $\sigma$ is defined as 
\[
	V_{ALG}(\sigma) = \sum_{ J \in {\cal L} } V_{ALG}(J), 
\]
where ${\cal L}$ is a list consisting of the $n$ given jobs.
The objective is to maximize the total satisfaction of the $n$ jobs. 
%

%
In this paper, 
we consider an online variant of this problem. 
Specifically, 
$n$ jobs are given one by one. 
The jobs are not necessarily given in order of release time. 
%
%
An online algorithm must place a given job to a machine before the next job is given. 
Once a job is placed on a machine, 
it cannot be removed later. 
That is, 
preemption is not allowed. 
The total number $n$ of given jobs is not known to the online algorithm, and 
it does not require this information until after all the jobs arrive. 
We say that 
the competitive ratio of an online algorithm $A$ is at most $c$ or 
$A$ is $c$-competitive 
if, for any input, 
the profit gained by an offline optimal algorithm $OPT$ is at most 
$c$ times the profit gained by $A$. 
\fi
\ifnum \count12 > 0
\begin{figure*}[ht]
	 \begin{center}
	  \includegraphics[width=100mm]{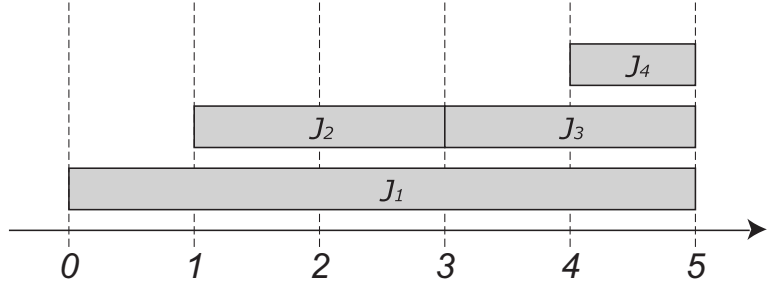}
	 \end{center}
	 \caption{
\ifnum \count10 > 0
%
%
4
4
%
$J_{1}$
$J_{1}$
%
3
$v(J_{1})/15, v(J_{3})/6, v(J_{4})/3$
%
\fi
\ifnum \count11 > 0
%
%
An example in which four jobs $J_{i} \hspace{1mm} (i \in [1, 4])$ are placed on the same machine. 
The interval $[0, 5]$ consists of four $P$-intervals $[0, 1], [1, 3], [3, 4]$ and $[4, 5]$. 
For example, 
$J_{1}$ exists during the interval $[0, 5]$ and thus 
the profit from $[0, 1]$ of $J_{1}$ is $v(J_{1})/5$. 
Also, 
since three jobs $J_{1}, J_{3}$ and $J_{4}$ in $[4, 5]$, 
the profits from this interval of $J_{1}, J_{3}$ and $J_{4}$ are $v(J_{1})/15, v(J_{3})/6$ and $v(J_{4})/3$, respectively. 
\fi
			}
	\label{fig:modelex}
\end{figure*}
\fi
%
\section{General Profit Case} \label{sec:general}


%
\ifnum \count10 > 0
%
%
job
$m \geq 3$
\fi
\ifnum \count11 > 0
%
%
In this section, 
we consider the case in which the profits of jobs are arbitrary.
First, we consider the case $m=2$ for better understanding of any $m \geq 3$. 
\fi
\begin{theorem}\label{thm:GLB2}
	\ifnum \count10 > 0
	%
	%
	$m = 2$
	%
	
	%
	\fi
	\ifnum \count11 > 0
	%
	%
	When $m = 2$, 
	there does not exist any deterministic online algorithm whose competitive ratio is bounded. 
	\fi
\end{theorem}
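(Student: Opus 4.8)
The plan is to exhibit, for any deterministic online algorithm $ALG$, an adversarial input on which $OPT$ gains a constant profit while $ALG$'s profit tends to $0$, so that no bound $c$ on the competitive ratio can hold. The key feature to exploit is that there is no preemption and no capacity constraint: once $ALG$ commits a job to a machine, the adversary can flood that machine with later jobs whose intervals overlap, driving down the per-subinterval profit $V_{ALG}(J,i) = \frac{x_{i+1}-x_i}{d(J)-r(J)}\cdot\frac{v(J)}{k_{ALG}(\cdot)}$ of the job it just placed, while an offline algorithm, knowing the future, could have isolated that job on the other machine.

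The construction I would use proceeds in rounds. First release a job $J_0$ with a large profit and a long interval, say $(0, L)$ with $v(J_0)$ large; $ALG$ must place it on machine $1$ or machine $2$, say (by symmetry) machine $1$. Now the adversary releases a batch of many jobs, each with tiny profit but whose intervals all lie inside $(0,L)$ and overlap the whole interval (or enough of it), so that on whichever machine $ALG$ puts them the crowding factor $k$ becomes huge on $J_0$'s interval if $ALG$ puts them on machine $1$. If $ALG$ instead puts them all on machine $2$, then the adversary next releases a second expensive long job $J_1$ with interval contained in the span of those cheap jobs — $ALG$ is now forced to put $J_1$ either on machine $1$ (crowding it against $J_0$ — but I need $J_1$'s interval to overlap $J_0$'s, which I can arrange) or on machine $2$ where it is crowded by the cheap jobs. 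The point is that with only two machines and an unbounded supply of cheap crowding jobs, the adversary can always force every high-profit job to share a machine interval with arbitrarily many other jobs, so $ALG$ realizes only an arbitrarily small fraction of each expensive job's profit; meanwhile $OPT$ places the two (or few) expensive jobs one per machine and parks all the cheap jobs wherever they fit, collecting essentially all of $v(J_0)+v(J_1)$ plus the negligible cheap profits.

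Concretely I would make the cheap jobs numerous enough and short enough (many disjoint or near-disjoint tiny intervals is not what we want — we want them to pile up, so they should all share a common subinterval with the target expensive job) that $k_{ALG}$ on the relevant $P$-interval of the expensive job exceeds any prescribed $N$; since all jobs must be completed and placed, and there are only two machines, a pigeonhole argument forces the expensive job's interval to be covered by a $P$-interval of multiplicity at least roughly $N/2$ somewhere, giving $V_{ALG}(J_{\text{exp}}) \le$ (small constant)$/(N/2) + (\text{cheap contributions})$. Taking $v(J_0)$ of order $1$, the cheap profits of order $\epsilon$, and $N\to\infty$ makes $V_{ALG}(\sigma)\to 0$ while $V_{OPT}(\sigma)$ stays bounded below by a positive constant. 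I would finish by noting that the ratio $V_{OPT}(\sigma)/V_{ALG}(\sigma)$ is then unbounded over the family, contradicting $c$-competitiveness for any finite $c$.

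The main obstacle I anticipate is the bookkeeping around $P$-intervals and the "no endpoint inside" technicality in the definition: I must choose the cheap jobs' release times and deadlines so that the crowding actually occurs on a genuine $P$-interval of the expensive job (i.e., so that the multiplicity is uniformly high across a subinterval, not split into many thin slivers whose individual contributions still sum up unfavorably for us), and I must make sure the adversary's second move is genuinely forced — that wherever $ALG$ placed the first expensive job and the first batch, machine $2$ (or $1$) really has no "room" in the relevant interval without crowding. Handling both machines simultaneously, rather than one, is the crux: I expect to need the two-expensive-job version ($J_0$ and $J_1$) precisely so that no matter how $ALG$ splits its two machines between the two expensive jobs, at least one of them ends up crowded by the cheap batch, and I would verify this by a short case analysis on which machine each expensive job lands on.
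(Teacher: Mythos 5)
There is a genuine gap, and it sits exactly where you flagged the crux. A batch of cheap jobs can force heavy crowding on only \emph{one} machine: the algorithm is free to designate one machine as a ``dump'' and place every cheap batch there, sacrificing only the cheap jobs' total profit $\epsilon$, while keeping the other machine reserved for expensive jobs. Your pigeonhole step guarantees that \emph{some} machine carries multiplicity $N/2$, but not the machine carrying the expensive job. In your two-round case analysis, if $ALG$ dumps the cheap batch on machine $2$ and then places $J_1$ on machine $1$ next to $J_0$, the crowding factor on $J_1$ is only $2$, so $ALG$ retains $v(J_0)+v(J_1)/2$ against $OPT$'s $v(J_0)+v(J_1)$ --- a bounded ratio. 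Adding more rounds of the same type does not help: if $k$ mutually overlapping expensive jobs all land on the reserved machine, $ALG$ still collects their full combined interval profit once, while $OPT$ on two machines collects at most twice that, again a ratio of about $2$. So the construction as proposed proves only a constant lower bound, not unboundedness.

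The paper's proof repairs this with three ingredients you are missing. First, all jobs have profit one, so the ``crowding'' jobs in the final batch are themselves the jobs $OPT$ profits from (there are $a_{\ell}$ of them with pairwise disjoint intervals, so $OPT$ earns $a_{\ell}$ by isolating them on one machine); the algorithm cannot afford to dump them. Second, machine $1$ is crowded unconditionally by pigeonhole: $2\cdot{\tt MaxC}$ identical unit-interval jobs force one machine (WLOG the first) to hold at least ${\tt MaxC}$ of them. Third, machine $2$ is crowded \emph{adaptively}: each round partitions the current interval into $a_{\ell}$ tiny jobs; if the algorithm puts them all on the crowded first machine the game ends (they are all diluted by a factor ${\tt MaxC}$), and otherwise the adversary recurses into the interval of a job the algorithm placed on machine $2$, incrementing that machine's multiplicity there. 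After ${\tt MaxC}$ rounds both machines are crowded by about ${\tt MaxC}$ in the innermost interval, and the batch sizes $a_{\ell}$ are chosen to grow so fast that the final batch dominates all earlier profit. Without this recursion (or some equivalent mechanism that forces \emph{both} machines to become heavily loaded in a common interval using jobs the algorithm cannot afford to sacrifice), the unbounded ratio cannot be extracted.
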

\begin{proof}
	\ifnum \count10 > 0
	%
	%
	%
	%
	%
	$J$
	%
	%
	
	%
	\noindent\vspace{-1mm}\rule{\textwidth}{0.5mm} 
	%
	%
	{\bf\boldmath Step~1:} 
	${a}_1 > 0$
	$\ell := 1$, 
	$s_1 := 0$.\\
	%
	%
	{\bf\boldmath Step~2:} 
	$2 \cdot {\tt MaxC}$
	$ON$
	$x_1 \geq x_2$
	\\
	%
	%
	{\bf\boldmath Step~3:} 
	$\ell = {\tt MaxC} + 1$
	\\
	%
	%
	{\bf\boldmath Step~4:} 
	${a}_{\ell}$
	$r(J_{\ell, i}) = s_{\ell} + p_{\ell} i/{a}_{\ell}$, 
	$d(J_{\ell, i}) = s_{\ell} + p_{\ell} (i+1)/{a}_{\ell}$
	$p_{1} = 1$
	$\ell \geq 2$
	$p_{\ell} = d(J_{\ell-1, 1}) - r(J_{\ell-1, 1})$
	%
	%
	{\bf\boldmath Step~5:} 
	%
	\\
	%
	%
	\hspace*{1mm}
	{\bf\boldmath Case~5.1 ($ON$
	%
	\\
	%
	%
	\hspace*{1mm}
	{\bf\boldmath Case~5.2 (
	%
	$ON$
	$s_{\ell+1} := s_{\ell} + p_{\ell} i'/{a}_{\ell}$%
	$(2 + \sum_{j = 1}^{\ell} \max \{ \frac{1}{x_1 + 1}, \frac{1}{x_2 + j} \} {a}_{j}) / {a}_{\ell+1}$
	$\ell := \ell + 1$%
	Step~3
	\\
	\noindent\vspace{-1mm}\rule{\textwidth}{0.5mm} 
	%
	
	%
	{\tt FinC}
	$ON$
	%
	%
	%
	%
	\[
		V_{OPT}(\sigma_{{\tt FinC}}) \geq V_{OFF}(\sigma_{{\tt FinC}}) \geq {a}_{{\tt FinC}}
	\]
	\[
		V_{OPT}(\sigma_{{\tt MaxC}+1}) \geq V_{OFF}(\sigma_{{\tt MaxC}+1}) \geq {a}_{{\tt MaxC}}
	\]
	%
	%
	%
	$ON$
	$j$th 
	1
	%
	%
	%
	
	%
	$\ell (\in [2, {\tt FinC}-1])$
	Case~5.2
	$ON$
	%
	%
	$\ell$
	$ON$
	$[s_{\ell}, s_{\ell} + p_{\ell}]$
	$x_2 + \ell - 1$
	%
	%
	Case~5.2
	$\ell$
	$\ell$
	%
	$\ell$
	%
	%
	Case~5.1
	$\ell$
	%
	%
	$x_1 \geq x_2$
	$x_1 \geq {\tt MaxC}$
	%
	%
	${\tt FinC} \leq {\tt MaxC}$
	\begin{eqnarray*}
		V_{ON}(\sigma_{\tt FinC}) 
			&\leq& 2 + \sum_{j = 1}^{{\tt FinC}-1} \max \left \{ \frac{1}{x_1 + 1}, \frac{1}{x_2 + j} \right \} {a}_{j}
						+ \frac{{a}_{{\tt FinC}}}{x_1 + 1} \\
			&<& 2 + \sum_{j = 1}^{{\tt FinC}-1} \max \left \{ \frac{1}{x_1 + 1}, \frac{1}{x_2 + j} \right \} {a}_{j}
						+ \frac{{a}_{{\tt FinC}}}{{\tt MaxC}}
	\end{eqnarray*}
	2
	%
	%
	Step~3
	\begin{eqnarray*}
		V_{ON}(\sigma_{{\tt MaxC}+1}) 
			&\leq& 2 + \sum_{j = 1}^{{\tt MaxC}} \max \left \{ \frac{1}{x_1 + 1}, \frac{1}{x_2 + j} \right \} {a}_{j}\\
			&\leq& 2 + \sum_{j = 1}^{{\tt MaxC}-1} \max \left \{ \frac{1}{x_1 + 1}, \frac{1}{x_2 + j} \right \} {a}_{j}
						+ \frac{{a}_{{\tt MaxC}}}{{\tt MaxC}}
	\end{eqnarray*}
	%
	%
	\begin{eqnarray*}
		\frac{ V_{OPT}(\sigma_{{\tt FinC}}) }{ V_{ON}(\sigma_{{\tt FinC}}) } 
			&\geq& \frac{ {a}_{{\tt FinC}} }{ 2 + \sum_{j = 1}^{{\tt FinC}-1} \max \{ \frac{1}{x_1 + 1}, \frac{1}{x_2 + j} \} {a}_{j}  +  {a}_{{\tt FinC}} / {\tt MaxC} }\\
			&\geq& \frac{ {a}_{{\tt FinC}} }{ {a}_{{\tt FinC}} \delta_{{\tt FinC}} +  {a}_{{\tt FinC}} / {\tt MaxC} }
			\geq \frac{ 1 }{ \delta_{{\tt FinC}} + 1 / {\tt MaxC} }
	\end{eqnarray*}
	$\delta_{{\tt FinC}} = (2 + \sum_{j = 1}^{{\tt FinC}-1} \max \{ \frac{1}{x_1 + 1}, \frac{1}{x_2 + j} \}) / {a}_{{\tt FinC}}$
	%
	%
	Case~5.2
	$\delta_{{\tt FinC}}$
	$1/(\delta_{{\tt FinC}} + 1/{\tt MaxC})$
	%
	%
	\[
		\frac{ V_{OPT}(\sigma_{{\tt MaxC}+1}) }{ V_{ON}(\sigma_{{\tt MaxC}+1}) } 
			\geq {\tt MaxC}
	\]
	%
	
	%
	\fi
	\ifnum \count11 > 0
	%
	%
	Consider a deterministic online algorithm $ON$. 
	Let the profits of all the given jobs in this proof be one. 
	First, 
	we outline the routine to provide $ON$ with an input. 
	%
	If there exist at least $z(>0)$ jobs placed during a time interval on each machine, 
	and a new job $J$ is included in the interval, 
	then $ON$ can obtain a profit of at most $v(J)/(z+1)$ from $J$. 
	The following routine attempts to force each $ON$'s machine to place at least $z$ jobs during an interval. 
	%
	$OPT$ places only $J$ onto one machine and can obtain the profit of $v(J)$. 
	\noindent\vspace{-1mm}\rule{\textwidth}{0.5mm} 
	%
	%
	{\bf\boldmath Step~1:} 
	Let ${a}_1 > 0$ and ${\tt MaxC} > 0$ be sufficiently large integers. 
	%
	$\ell := 1$ and $s_1 := 0$. \\
	%
	%
	{\bf\boldmath Step~2:} 
	Give $2 \cdot {\tt MaxC}$ jobs $J$ such that $r(J) = 0$ and $d(J) = 1$. 
	Let $x_1$ ($x_2$) denote the number of the $2 \cdot {\tt MaxC}$ jobs which $ON$ places onto the first (second) machine. 
	Without loss of generality, 
	we assume that $x_1 \geq x_2$. 
	\\
	%
	%
	{\bf\boldmath Step~3:} 
	If $\ell = {\tt MaxC} + 1$, then finish. 
	\\
	%
	%
	{\bf\boldmath Step~4:} 
	Give ${a}_{\ell}$ jobs $J_{\ell,i} \hspace{1mm} (i = 0, 1, \ldots, {a}_{\ell}-1)$ 
	such that 
	$r(J_{\ell, i}) = s_{\ell} + p_{\ell} i/{a}_{\ell}$, 
	$d(J_{\ell, i}) = s_{\ell} + p_{\ell} (i+1)/{a}_{\ell}$, 
	in which 
	$p_{1} = 1$ 
	and 
	for any $\ell \geq 2$, $p_{\ell} = d(J_{\ell-1, 1}) - r(J_{\ell-1, 1})$. \\
	%
	%
	{\bf\boldmath Step~5:} 
	Execute one of the following two cases: 
	\\
	%
	%
	\hspace*{1mm}
	{\bf\boldmath Case~5.1 ($ON$ does not place any of $J_{\ell, i}$ onto the second machine):} 
	Finish. 
	\\
	%
	%
	\hspace*{1mm}
	{\bf\boldmath Case~5.2 (Otherwise):} 
	Let $J_{\ell, i'}$ be the job which placed onto the second machine by $ON$ and whose release time is the closest to $s_{\ell}$. 
	$s_{\ell+1} :=$
	\hspace*{1mm}
	$s_{\ell} + p_{\ell} i'/{a}_{\ell}$. 
	Let ${a}_{\ell+1} > 0$ be an integer 
	such that 
	$(2 + \sum_{j = 1}^{\ell} \max \{ \frac{1}{x_1 + 1}, \frac{1}{x_2 + j} \} {a}_{j}) / {a}_{\ell+1}$ is sufficiently small. 
	$\ell := \ell + 1$ and go to Step~3. 
	\\
	\noindent\vspace{-1mm}\rule{\textwidth}{0.5mm} 
	Let {\tt FinC} denote the value of $\ell$ at a time when the routine finishes. 
	%
	%
	Let $\sigma_{{\tt FinC}}$ denote the input given to $ON$. 
	For a given input $\sigma_{{\tt FinC}}$, 
	an offline algorithm $OFF$ places ${a}_{{\tt FinC}}$ jobs $J_{{\tt FinC}, i}$ given at the time of the final execution, 
	that is, the ${\tt FinC}$th execution of Step~4, 
	onto the second machine, and 
	places the other jobs onto the first machine. 
	Thus, 
	for any ${\tt FinC} \leq {\tt MaxC}$, 
	\[
		V_{OPT}(\sigma_{{\tt FinC}}) \geq V_{OFF}(\sigma_{{\tt FinC}}) \geq {a}_{{\tt FinC}}
	\]
	and 
	\[
		V_{OPT}(\sigma_{{\tt MaxC}+1}) \geq V_{OFF}(\sigma_{{\tt MaxC}+1}) \geq {a}_{{\tt MaxC}}. 
	\]
	Conversely, 
	we consider the profit gained by $ON$. 
	The number of jobs which are placed by $ON$ on the $j$th machine for each $j \in \{ 1, 2 \}$ 
	immediately after Step~2 is $x_j$, and 
	$ON$ can gain profit $1/x_j$ per job. 
	Since the profit of a job does not increase, 
	the total profit which $ON$ gains from all the jobs given in Step~2 is at most two at the end of the input. 
	%
	%
	
	%
	By the definition of Case~5.2, 
	jobs in the $\ell (\in [2, {\tt FinC}-1])$th execution of Step~4 are given in an interval 
	where $ON$ places jobs given in the $\ell-1$st execution of Step~4
	(Fig.~\ref{fig:glb2}). 
	Thus, 
	there exist $x_2 + \ell - 1$ jobs during the interval $[s_{\ell}, s_{\ell} + p_{\ell}]$ on $ON$'s second machine 
	immediately before the $\ell$th execution of Step~4. 
	Hence, 
	in the case where Case~5.2 is executed after the $\ell$th Step~4, 
	the profit of a job given in the $\ell$th Step~4 is at most $\max \{ \frac{1}{x_1 + 1}, \frac{1}{x_2 + \ell} \}$ 
	immediately after the $\ell$th execution of Step~4. 
	%
	Thus, 
	the total profit of the jobs given in the $\ell$th Step~4 is at most 
	$\max \{ \frac{1}{x_1 + 1}, \frac{1}{x_2 + \ell} \} {a}_{\ell}$. 
	Conversely, 
	if Case~5.1 is executed, 
	the profit of a job given in the $\ell$th Step~4 is at most $\frac{1}{x_1 + 1}$. 
	Thus, 
	the total profit is at most $\frac{{a}_{\ell}}{x_1 + 1}$. 
	Additionally, 
	$x_1 \geq {\tt MaxC}$ 
	because both $x_1 \geq x_2$ and $x_1 + x_2 = 2 \cdot {\tt MaxC}$ by definition. 
	Therefore, 
	if the routine finishes in Case~5.1 when ${\tt FinC} \leq {\tt MaxC}$, 
	we have 
	\begin{eqnarray*}
		V_{ON}(\sigma_{\tt FinC}) 
			&\leq& 2 + \sum_{j = 1}^{{\tt FinC}-1} \max \left \{ \frac{1}{x_1 + 1}, \frac{1}{x_2 + j} \right \} {a}_{j}
						+ \frac{{a}_{{\tt FinC}}}{x_1 + 1} \\
			&<& 2 + \sum_{j = 1}^{{\tt FinC}-1} \max \left \{ \frac{1}{x_1 + 1}, \frac{1}{x_2 + j} \right \} {a}_{j}
						+ \frac{{a}_{{\tt FinC}}}{{\tt MaxC}}, 
	\end{eqnarray*}
	where the second inequality follows from the fact that $x_1 \geq {\tt MaxC}$. 
	%
	%
	In addition, 
	if the routine finishes in Step~3, 
	that is, ${\tt FinC} = {\tt MaxC}+1$,
	we have 
	\begin{eqnarray*}
		V_{ON}(\sigma_{{\tt MaxC}+1}) 
			&\leq& 2 + \sum_{j = 1}^{{\tt MaxC}} \max \left \{ \frac{1}{x_1 + 1}, \frac{1}{x_2 + j} \right \} {a}_{j}\\
			&\leq& 2 + \sum_{j = 1}^{{\tt MaxC}-1} \max \left \{ \frac{1}{x_1 + 1}, \frac{1}{x_2 + j} \right \} {a}_{j}
						+ \frac{{a}_{{\tt MaxC}}}{{\tt MaxC}}. 
	\end{eqnarray*}
	By the above argument, 
	if ${\tt FinC} \leq {\tt MaxC}$, 
	\begin{eqnarray*}
		\frac{ V_{OPT}(\sigma_{{\tt FinC}}) }{ V_{ON}(\sigma_{{\tt FinC}}) } 
			&\geq& \frac{ {a}_{{\tt FinC}} }{ 2 + \sum_{j = 1}^{{\tt FinC}-1} \max \{ \frac{1}{x_1 + 1}, \frac{1}{x_2 + j} \} {a}_{j}  +  {a}_{{\tt FinC}} / {\tt MaxC} }\\
			&\geq& \frac{ {a}_{{\tt FinC}} }{ {a}_{{\tt FinC}} \delta_{{\tt FinC}} +  {a}_{{\tt FinC}} / {\tt MaxC} }
			\geq \frac{ 1 }{ \delta_{{\tt FinC}} + 1 / {\tt MaxC} }, 
	\end{eqnarray*}
	where $\delta_{{\tt FinC}} = (2 + \sum_{j = 1}^{{\tt FinC}-1} \max \{ \frac{1}{x_1 + 1}, \frac{1}{x_2 + j} \}) / {a}_{{\tt FinC}}$. 
	Note that as both $\delta_{{\tt FinC}}$ and $1/{\tt MaxC}$ tend to zero, 
	$1/(\delta_{{\tt FinC}} + 1/{\tt MaxC})$ tends to infinity. 
	In a similar manner, 
	\[
		\frac{ V_{OPT}(\sigma_{{\tt MaxC}+1}) }{ V_{ON}(\sigma_{{\tt MaxC}+1}) } 
			\geq {\tt MaxC}. 
	\]
	\fi
\end{proof}
\ifnum \count12 > 0
\begin{figure*}[ht]
	 \begin{center}
	  \includegraphics[width=130mm]{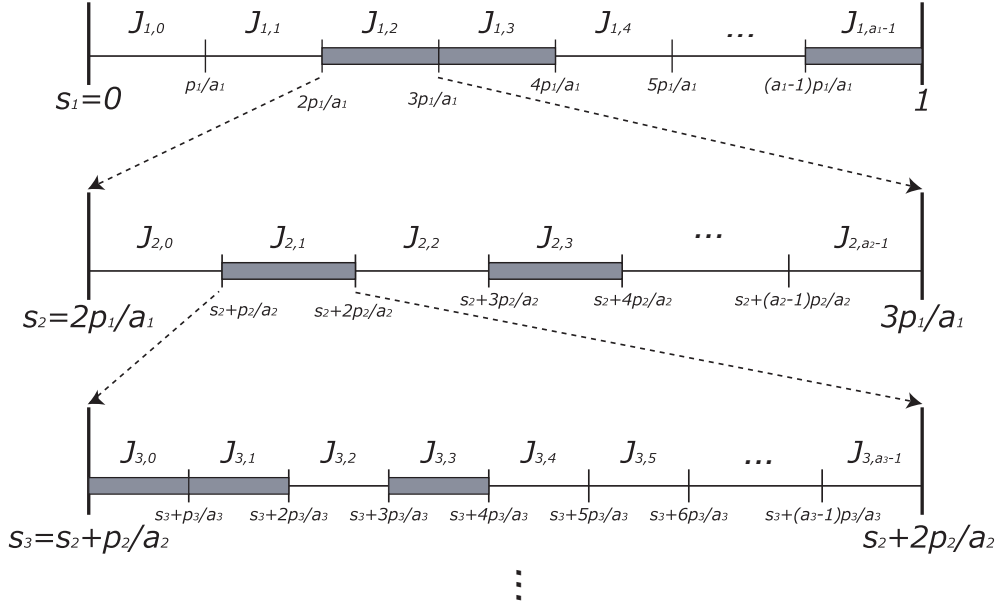}
	 \end{center}
	 \caption{
\ifnum \count10 > 0
%
%
%
Step~4
$ON$
$J_{1, 2}$
%
Step~4
$s_{2} = \frac{2 p_{1}}{a_{1}}$
%
%
%
$\ell$
$ON$
%
%
\fi
\ifnum \count11 > 0
%
%
Execution example of Step~4 and Case~5.1 in the routine. 
For jobs $J_{1, i} \hspace{1mm} (i = 0, \ldots, a_{1}-1)$ given in Step~4, 
the top figure shows that 
$ON$ places $J_{1, 0}$ and $J_{1, 1}$ onto the first machine and 
$J_{1, 2}$, 
which is represented by a gray rectangle, 
onto the second machine. 
At the second execution of Step~4, 
$s_{2} = \frac{2 p_{1}}{a_{1}}$ and 
jobs $J_{2, i} \hspace{1mm} (i = 0, \ldots, a_{2}-1)$ are given 
in the interval $[r(J_{1, 2}), d(J_{1, 2})] = [s_{2}, s_{2} + p_{2}]$. 
In the same way, 
jobs $J_{3, i} \hspace{1mm} (i = 0, \ldots, a_{3}-1)$ are given 
in $[r(J_{2, 1}), d(J_{2, 1})]$. 
Thus, 
$x_2 + \ell - 1$ jobs are placed on $ON$'s second machine in $[s_{\ell}, s_{\ell} + p_{\ell}]$ 
immediately after the $\ell$th execution of Step~4. 
\fi
			}
	\label{fig:glb2}
\end{figure*}
\fi
\ifnum \count11 > 0
%
%

%

%
\fi
\ifnum \count11 > 0
%
%
%
\fi
\begin{theorem}\label{thm:GLBm}
	\ifnum \count10 > 0
	%
	%
	%
	
	%
	\fi
	\ifnum \count11 > 0
	%
	%
	For any $m$, 
	there does not exist a competitive deterministic algorithm. 
	\fi
\end{theorem}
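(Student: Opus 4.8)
The plan is to generalise the adversarial routine used for Theorem~\ref{thm:GLB2}. As there, every job is given profit one, and the adversary aims to force the deterministic online algorithm $ON$ into a configuration in which, inside one tiny interval $I$, \emph{every} one of its $m$ machines already carries at least $C$ jobs whose interval contains $I$, where $C$ is a parameter that may be taken arbitrarily large. Once this holds, a final batch of $a$ pairwise‑disjoint jobs partitioning $I$ yields $ON$ profit at most $a/(C+1)$ (each such job meets at least $C+1$ jobs throughout its own interval, whichever machine $ON$ picks), whereas an offline algorithm that dedicates one fresh machine to this batch gains profit $a$; combined with the growth control on the batch sizes below, this drives the ratio to infinity.

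The routine runs as follows. In an analogue of Step~2, first give $N\ge mC$ unit jobs equal to $[0,1]$; by pigeonhole some machine, say machine~$1$, receives $x_1\ge C$ of them, so machine~$1$ is $C$‑loaded on all of $[0,1]$, while $ON$'s total gain from these $N$ jobs is at most $m$ (at most one per machine) and never increases afterwards. Then iterate over rounds $\ell=1,2,\dots$: the adversary keeps a current interval $I_\ell$ with $I_1\subseteq[0,1]$, and in round $\ell$ gives $a_\ell$ pairwise‑disjoint jobs partitioning $I_\ell$, where $a_\ell$ is chosen \emph{after} seeing all of $ON$'s earlier placements and so large that $ON$'s total profit from everything released before round $\ell$, divided by $a_\ell$, is negligible (exactly as the $a_\ell$ are chosen in the proof of Theorem~\ref{thm:GLB2}; note that all profits are non‑increasing over time, so a bound computed at the end of a round stays valid). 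Call a machine \emph{saturated} at the start of round $\ell$ if at least $C$ of the jobs already on it have interval containing $I_\ell$. If $ON$ places every round‑$\ell$ job on saturated machines, stop: then $ON$ gains at most $a_\ell/(C+1)$ from this batch and at most a controlled amount $\Gamma$ from all earlier jobs, while the offline algorithm dedicates a fresh machine to the batch and gains $a_\ell$, so the competitive ratio is at least $a_\ell/(\Gamma+a_\ell/(C+1))$, unbounded as $C\to\infty$ and $\Gamma/a_\ell\to0$. Otherwise $ON$ has placed some round‑$\ell$ job $J^\star$ on a non‑saturated machine $\mu$; set $I_{\ell+1}:=(r(J^\star),d(J^\star))\subsetneq I_\ell$ and go to round $\ell+1$.

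The termination argument is the genuinely new ingredient relative to $m=2$. Nesting $I_{\ell+1}$ inside $J^\star$ increases, by exactly one, the number of jobs on machine $\mu$ whose interval contains the current interval — the $x_k$ Step‑2 jobs and, from each earlier round, the unique job we nested into still contain $I_{\ell+1}$, and $J^\star$ is added — and this count is monotone under shrinking the interval. Hence the potential $\Phi_\ell:=\sum_{k=1}^{m}\max\{0,\;C-c_k(I_\ell)\}$, where $c_k(I_\ell)$ counts the already‑placed jobs on machine $k$ whose interval contains $I_\ell$, satisfies $\Phi_{\ell+1}=\Phi_\ell-1$ in every non‑stopping round, and $\Phi_1\le(m-1)C$ since machine~$1$ is $C$‑loaded from the outset (this is where $N\ge mC$ is used). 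Therefore after at most $(m-1)C$ non‑stopping rounds $\Phi=0$, i.e.\ every machine is saturated on the current interval, and the final partitioning batch there forces an unbounded ratio exactly as in the first case. The main obstacle is precisely this bookkeeping for the $m-1$ ``escape'' machines: in the two‑machine routine the overflow can only fall on the single second machine, whereas here $ON$ may scatter each batch over several not‑yet‑saturated machines, so one must verify both that some machine's load in the \emph{nested} interval nonetheless grows each round and that loads are monotone under nesting; the remaining estimates are routine adaptations of the computations in the proof of Theorem~\ref{thm:GLB2}.
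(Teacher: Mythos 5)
Your proposal is correct and follows essentially the same strategy as the paper, which for this theorem gives only a proof sketch ("construct the input to force the number of jobs $ON$ places during an interval on each of the $m$ machines to approach ${\tt MaxC}$"). Your potential function $\Phi_\ell=\sum_k\max\{0,C-c_k(I_\ell)\}$, together with the observation that nesting into the escape job $J^\star$ raises exactly one machine's containment count and that these counts are monotone under shrinking the interval, is a valid and explicit way to supply the termination bookkeeping that the paper's sketch leaves implicit.
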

\begin{proofsketch}
	\ifnum \count10 > 0
	%
	%
	%
	Step~4
	$OPT$
	$ON$
	%
	$ON$
	${\tt MaxC}$
	%
	$\frac{ V_{OPT}(\sigma) }{ V_{ON}(\sigma) } \geq {\tt MaxC}$
	%
	
	%
	\fi
	\ifnum \count11 > 0
	%
	%
	In Theorem~\ref{thm:GLB2}, 
	we use the routine to force the numbers of jobs existing during an interval on both machines of an online algorithm $ON$ to approach a sufficiently large integer ${\tt MaxC}$. 
	%
	Thus, 
	while $OPT$ can gain the profit of ${a}_{{\tt MaxC}}$ (${a}_{{\tt FinC}}$) from ${a}_{{\tt MaxC}}$ (${a}_{{\tt FinC}}$) jobs given in Step~4 of the routine, 
	$ON$ gains only at most approximately $\frac{ {a}_{{\tt MaxC}} }{ {\tt MaxC} }$ ($\frac{ {a}_{{\tt FinC}} }{ {\tt MaxC} }$). 
	%
	In a similar manner, 
	we can construct the input $\sigma$ to force the number of jobs which $ON$ places during an interval on each of $m$ machines to approach ${\tt MaxC}$. 
	As a result, 
	we have $\frac{ V_{OPT}(\sigma) }{ V_{ON}(\sigma) } \geq {\tt MaxC}$. 
	\fi
\end{proofsketch}
%

\section{Upper Bounds for Uniform Profit Case} \label{sec:ubuc}
\ifnum \count10 > 0
%
%
%
%
%
1
%
%

%
\fi
\ifnum \count11 > 0
%
%
In this section, 
we consider the uniform profit case, that is, 
the case in which the profit of a job is equal to its length. 
In this case, 
the total amount of time during which at least one job is scheduled on a machine is equal to the total amount of the satisfaction gained on the machine. 
That is, 
the objective of this case can be regarded as maximizing the working hours of all the machines.  
\fi
%

\subsection{Preliminaries}
\ifnum \count10 > 0
%
%
2
$t'_{1} < t_{2}$
$I$
%
%
%
%
%
%
$t' - t$
$|I|$
%

%
%
%
$GR$
%
%

%
%
%
%
%
%
uniform profit case
%
%
%

%
\fi
\ifnum \count11 > 0
%
%
After the end of the input, 
we need to evaluate the profit from each job by $OPT$ using the profits yielded from intervals of jobs scheduled by $GR$ to analyze the performance of $GR$. 
Then, 
we classify intervals (or points) in a job $J$ by $GR$ or $OPT$ into the following four categories 
depending on the behaviors of $GR$ and $OPT$ for $J$. 
For any two intervals $I = [t_{1}, t_{2}]$ and $I' = [t'_{1}, t'_{2}]$, 
we say that $I$ {\em intersects} with $I'$ 
if $t'_{1} < t_{2}$ and $t_{1} < t'_{2}$. 
For any job $J$, 
we call the interval $[ r(J), d(J) ]$ the {\em interval of $J$}. 
If an algorithm $ALG$ places two jobs onto the same machine and they intersect, 
then we say that they {\em overlap}. 
For any interval $I = [t, t']$, 
we call the value of $t' - t$ the {\em length of $I$}, 
written as $|I|$. 
We give the definition of a greedy algorithm $GR$ and 
analyze its performance in this section. 
$GR$ places a given job $J$ onto the machine on which 
$GR$ gains the largest profit from $J$. 
%
The tie-breaking rule selects the minimum indexed machine.
For ease of analyzing, 
we introduce the following idea. 
%
%
Suppose that 
two jobs $J_1$ and $J_2$ are placed onto the same machine, and 
they overlap in an interval $I$. 
Also, 
suppose that $J_1$ is the first job placed in $I$ on the machine.
Then, 
pretend that the profits from $I$ of $J_{1}$ and $J_{2}$ are $|I|$ and zero, respectively. 
That is, 
we pretend that 
a job which is placed chronologically first in an interval on a machine monopolizes the machine power in the interval. 
Note that 
in the uniform profit case, 
the total profit gained from an interval of jobs placed on a machine depends 
not on how large the number of the jobs in the interval is 
but on whether there exists at least one job placed in the interval. 
That is why this assumption does not affect the profit of any algorithm. 
\fi
%

\subsection{Overview of Analysis}

\ifnum \count10 > 0
%
%
$GR$
$GR$
%
$GR$
%

%
$GR$
$OPT$
$J$
%
%
%
$OPT$
$GR$
$J$
%
%
$GR$
$J$
%
%
$c$-interval
%
%
$GR$
$J$
%
%
$J$
$J$
%

%
%
%
%
%
$c$-interval
$ge$-interval
%
$c$-interval
$ge$-interval
%
%
\begin{equation} \label{sec:upanym.eq.1}
	V_{GR}(\sigma) = V_{c}(\sigma) + V_{ge}(\sigma)
\end{equation}
\begin{equation} \label{sec:upanym.eq.2}	
	V_{OPT}(\sigma) = V_{c}(\sigma) + V_{oe}(\sigma) + V_{oe'}(\sigma)
\end{equation}
%
%
%
\begin{description}
\itemsep=-2.0pt
\setlength{\leftskip}{10pt}
\item[1.]
%
\item[2.]
$GR$
%
\item[3.]
%
\end{description}
%
%
$oe$-interval
%
$M_{i}(f) = \varnothing$
%
$M_{1}(f) \cup M_{2}(f) \cup M_{3}(f) \ne \varnothing$
%
$M_{1}(f) \ne \varnothing$
$p$-fraction $f'$
$M_{1}(f) = f'$
$M_{2}(f) \ne \varnothing$
$ge$-fraction $f'$
$M_{2}(f) = f'$
$M_{3}(f) \ne \varnothing$
$p$-fraction $f'$
$M_{3}(f) = f'$
%
%
$M_{i}(f) \cap M_{i}(f')$
%
%
\begin{equation} \label{sec:upanym.eq.3}
	V_{oe}(\sigma) \leq 2 V_{c}(\sigma)
\end{equation}
\begin{equation} \label{sec:upanym.eq.4}	
	V_{oe'}(\sigma) \leq 3 V_{ge}(\sigma)
\end{equation}
%
%
\begin{eqnarray*}
	V_{OPT}(\sigma) 
		&=& V_{c}(\sigma) + V_{oe}(\sigma) + V_{oe'}(\sigma) \\
		&\leq& V_{c}(\sigma) + 2 V_{c}(\sigma) + 3 V_{ge}(\sigma) \mbox{\hspace{6mm} (
		&=& 3 (V_{c}(\sigma) + V_{ge}(\sigma))
		= 3 V_{GR}(\sigma) \mbox{\hspace{6mm} (
\end{eqnarray*}
%
%
%

%
\fi
\ifnum \count11 > 0
%
%
To evaluate the performance of $GR$, that is, its competitive ratio, 
we bound the profit of $OPT$ at the end of the input using that of $GR$. 
Then, 
we classify intervals of jobs placed by either $GR$ or $OPT$ into four categories. 
For any job $J$ and any interval $I \subseteq [ r(J), d(J) ]$, 
if the profit gained from $I$ of $J$ by $GR$ is zero and 
that by $OPT$ is $|I|$, 
then we call $I$ of $J$ an {\em $OPT$ extra interval of $J$}
(denoted as an {\em $oe$-interval}, for short)
(see Fig.~\ref{fig:ivcat}). 
Also, 
if the profit gained from $I$ of $J$ by $OPT$ is zero and 
that by $GR$ is $|I|$, 
then we call $I$ of $J$ a {\em $GR$ extra interval of $J$}
(a {\em $ge$-interval}, for short).
If the profits gained from $I$ of $J$ by $GR$ and $OPT$ are both $|I|$, 
we call $I$ of $J$ a {\em common interval of $J$}
(a {\em $c$-interval}, for short).
For ease of presentation, 
we call an interval which is a $c$-interval or a $ge$-interval a {\em profit interval}
(a {\em $p$-interval}, for short). 
If the profits gained from $I$ of $J$ by $GR$ and $OPT$ are both zero, 
we call $I$ of $J$ a {\em non-profit interval of $J$}
(an {\em $n$-interval}, for short).
Further, 
we call a point in an $oe$-interval (a $ge$-interval, a $c$-interval, and a $p$-interval, respectively) of $J$ 
an {\em $oe$-fraction} (a {\em $ge$-fraction}, a {\em $c$-fraction}, and a {\em $p$-fraction}, respectively) of $J$. 
We evaluate the competitive ratio of $GR$ by ``assigning'' $p$-fractions (i.e., $p$-intervals) to all $oe$-fractions (i.e., $oe$-intervals) according to a routine, which is defined later.
This ``assignment'' is realized by some functions. 
Let $V_{oe}(\sigma)$ be the total length of $oe$-intervals to which $c$-intervals are assigned. 
Let $V_{oe'}(\sigma)$ be the total length of $oe$-intervals to which $ge$-intervals are assigned. 
Also, 
let $V_{c}(\sigma)$ be the total length of $c$-intervals and 
$V_{ge}(\sigma)$ be the total length of $ge$-intervals. 
Then, we have by definition, 
\begin{equation} \label{sec:upanym.eq.1}
	V_{GR}(\sigma) = V_{c}(\sigma) + V_{ge}(\sigma)
\end{equation}
and 
\begin{equation} \label{sec:upanym.eq.2}	
	V_{OPT}(\sigma) = V_{c}(\sigma) + V_{oe}(\sigma) + V_{oe'}(\sigma). 
\end{equation}
We will show the following three properties of the assignments by the routine: 
\begin{description}
\itemsep=-2.0pt
\setlength{\leftskip}{10pt}
\item[1.]
Each $oe$-fraction is assigned a $p$-fraction, 
\item[2.]
a $c$-fraction of a job given to $GR$ is assigned at most twice, and 
\item[3.]
a $ge$-fraction is assigned at most three times. 
\end{description}
To show these, 
we will construct sequentially three functions $M_{1},M_{2}$ and $M_{3}$ from $oe$-intervals to $p$-intervals satisfying the following properties: 
Initially, 
for any $oe$-fraction $f$ and any $i \in \{ 1, 2, 3 \}$, 
$M_{i}(f) = \varnothing$. 
At the end of the input, 
for any $oe$-fraction $f$, 
$M_{1}(f) \cup M_{2}(f) \cup M_{3}(f) \ne \varnothing$. 
There exists a $p$-fraction $f'$ such that $M_{1}(f) = f'$ 
if $M_{1}(f) \ne \varnothing$. 
There exists a $ge$-fraction $f'$ such that $M_{2}(f) = f'$ 
if $M_{2}(f) \ne \varnothing$.  
There exists a $p$-fraction $f'$ such that $M_{3}(f) = f'$ 
if $M_{3}(f) \ne \varnothing$. 
For any $oe$-fractions $f$ and $f'(\ne f)$ and any $i \in \{ 1, 2, 3 \}$, 
$M_{i}(f) \cap M_{i}(f') = \varnothing$. 
Then, we have by these functions, 
\begin{equation} \label{sec:upanym.eq.3}
	V_{oe}(\sigma) \leq 2 V_{c}(\sigma)
\end{equation}
and 
\begin{equation} \label{sec:upanym.eq.4}
	V_{oe'}(\sigma) \leq 3 V_{ge}(\sigma). 
\end{equation}
By Eq.~(\ref{sec:upanym.eq.2}), we have 
\begin{eqnarray*}
	V_{OPT}(\sigma) 
		&=& V_{c}(\sigma) + V_{oe}(\sigma) + V_{oe'}(\sigma) \\
		&\leq& V_{c}(\sigma) + 2 V_{c}(\sigma) + 3 V_{ge}(\sigma)\mbox{\hspace{6mm} (by Eqs.~(\ref{sec:upanym.eq.3}) and (\ref{sec:upanym.eq.4}))}\\
		&=& 3 (V_{c}(\sigma) + V_{ge}(\sigma)) 
		= 3 V_{GR}(\sigma), \mbox{\hspace{6mm} (by Eq.~(\ref{sec:upanym.eq.1}))} \\
\end{eqnarray*}
which leads to the following theorem: 
\fi
\ifnum \count12 > 0
\begin{figure*}[ht]
	 \begin{center}
	  \includegraphics[width=100mm]{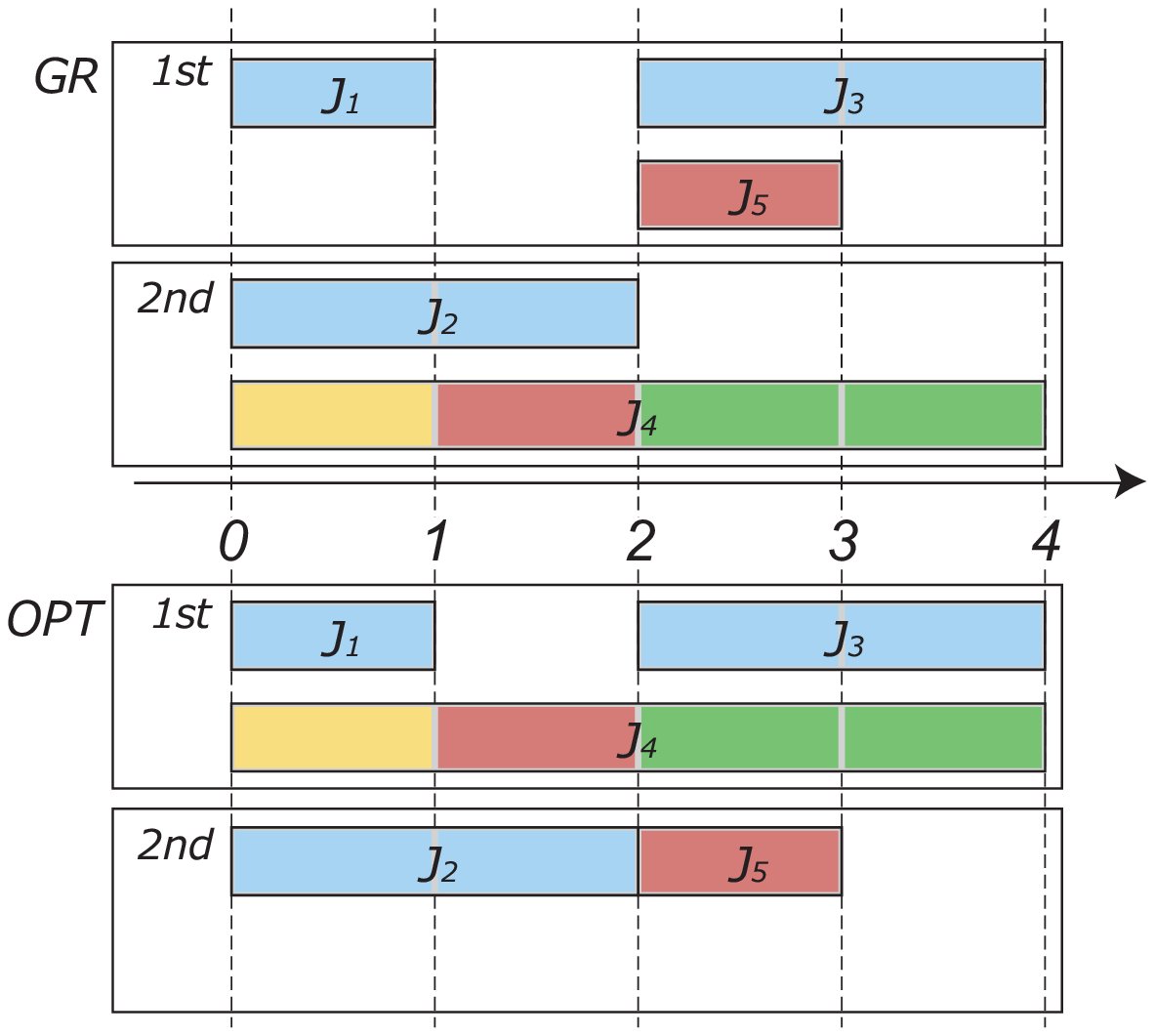}
	 \end{center}
	 \caption{
\ifnum \count10 > 0
%
%
2
%
%
$c$-interval
$ge$-interval
$oe$-interval
$n$-interval
%
$J_{4}$
$GR$
$OPT$
%
$J_{4}$
$GR$
$n$-interval
\fi
\ifnum \count11 > 0
%
%
An example in which five jobs $J_{i} \hspace{1mm} (i \in [1, 5])$ are given when $m=2$. 
The top (bottom) figure presents jobs placed by $GR$ ($OPT$). 
$c$-intervals ($ge$-intervals, $oe$-intervals and $n$-interval, respectively)
are shown in blue (green, red and yellow, respectively) squares. 
The intervals $[1, 2]$ of $J_{4}$ and $J_{2}$ by $GR$ overlap 
but the intervals $[1, 2]$ of $J_{4}$ by $OPT$ do not. 
Thus, 
this interval of $J_{4}$ is an $oe$-interval. 
Also, 
neither $GR$ nor $OPT$ gains the profit from the interval $[0, 1]$ of $J_{4}$ and thus 
this interval is an $n$-interval. 
\fi
			}
	\label{fig:ivcat}
\end{figure*}
\fi
\begin{theorem}\label{thm:upanym}
	\ifnum \count10 > 0
	%
	%
	$GR$
	\fi
	\ifnum \count11 > 0
	%
	%
	For any $m \geq 2$, 
	the competitive ratio of $GR$ is at most three. 
	\fi
\end{theorem}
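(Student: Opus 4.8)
The plan is to follow exactly the roadmap laid out in the ``Overview of Analysis'' subsection: the theorem will follow from Eq.~(\ref{sec:upanym.eq.1}), Eq.~(\ref{sec:upanym.eq.2}), and the two key inequalities Eq.~(\ref{sec:upanym.eq.3}) and Eq.~(\ref{sec:upanym.eq.4}), so the entire work is to construct the three assignment functions $M_1,M_2,M_3$ from $oe$-intervals to $p$-intervals and verify properties 1--3. Because of the monopolization convention introduced in the Preliminaries (the chronologically-first job in an interval on a machine is treated as collecting all the profit), every point on every machine at every time is ``owned'' by at most one job, so $c$-, $ge$-, $oe$-, and $n$-fractions partition cleanly and I can argue pointwise (per fraction) and then integrate.

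First I would fix the global picture: consider the final schedules of $GR$ and $OPT$. Take any $oe$-fraction $f$, say lying at time $t$ inside job $J$ with $m_{OPT}(J)=a$. By definition $OPT$ earns $|I|$-rate profit from $f$ but $GR$ earns zero, which (by the monopolization convention) means that on $GR$'s machine $m_{GR}(J)$ at time $t$ some job $J'\neq J$ placed before $J$ is already present. Now I would use the greedy rule: $GR$ placed $J$ on the machine maximizing its profit, so on \emph{every} machine — in particular on $OPT$'s machine $a$ — at the moment $J$ arrived, $GR$ was already covered at time $t$ as well (otherwise $GR$ would have gotten positive profit there by choosing a machine on which $t$ was free, contradicting that it got zero). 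This is the crucial structural fact: the time $t$ is covered on $GR$'s machine $a$ at the instant $J$ arrives, by some job $J''$ that arrived before $J$. That covering at time $t$ on machine $a$ is either a $c$-fraction (if $OPT$ also keeps a job covering $t$ on machine $a$ — indeed $OPT$ keeps $J$ there, so the first job on machine $a$ covering $t$ in $GR$'s order is a $c$- or $ge$- or $oe$-fraction...) — I would do the bookkeeping carefully here to route $f$ to a nearby $p$-fraction. Roughly: $M_1$ assigns $f$ to the $p$-fraction that covers time $t$ on $GR$'s machine $a$ at the moment $J$ arrives; if that target happens to itself be an $oe$-fraction (pushing the problem one level down), one re-routes via $M_2$/$M_3$ to the $ge$-fraction ultimately responsible, which is why three functions and a multiplicity of three are needed rather than one.

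The accounting for properties 2 and 3 is where the constants $2$ and $3$ come from, and I would prove them by a charging argument bounding how many $oe$-fractions can map (under each $M_i$) to a fixed $p$-fraction $g$ at time $t$ on some machine $b$. A $c$-fraction $g$ of a job $\tilde J$ with $m_{GR}(\tilde J)=m_{OPT}(\tilde J)=b$ can be the $M_1$-image of an $oe$-fraction $f\in J$ only if $OPT$ placed $J$ on a machine other than $b$ while $GR$ was blocked at $t$ on that machine by $\tilde J$; I'd show at most two such $J$ can coexist (one corresponding to $f$'s own machine, one to the re-routing direction), giving the factor $2$ in Eq.~(\ref{sec:upanym.eq.3}). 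Similarly a $ge$-fraction can absorb at most one charge from each of $M_1,M_2,M_3$, giving the factor $3$ in Eq.~(\ref{sec:upanym.eq.4}). The disjointness requirement $M_i(f)\cap M_i(f')=\varnothing$ for $f\neq f'$ is what makes ``total length of images $\le$ total length of $p$-intervals counted with the right multiplicity'' legitimate when integrating.

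The main obstacle I expect is precisely making the re-routing of $M_1$ well-founded: when the $p$-fraction that $M_1$ wants to assign to an $oe$-fraction $f$ is itself an $oe$-fraction, one has to follow a chain of ``who blocked whom, in arrival order'' and show it terminates at a genuine $ge$- or $c$-fraction without cycling, and that the accumulated multiplicity never exceeds the claimed bounds. Arrival order gives a strict decrease along the chain (each step moves to a job that arrived strictly earlier), so termination is fine; the delicate part is the simultaneous bound on multiplicities across all three functions, i.e.\ proving properties 2 and 3 hold \emph{after} all three functions are built, not just for $M_1$ in isolation. Once that combinatorial lemma is established, plugging Eqs.~(\ref{sec:upanym.eq.3}) and (\ref{sec:upanym.eq.4}) into Eq.~(\ref{sec:upanym.eq.2}) and comparing with Eq.~(\ref{sec:upanym.eq.1}) gives $V_{OPT}(\sigma)\le 3V_{GR}(\sigma)$ for every input $\sigma$, which is the theorem.
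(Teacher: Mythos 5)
Your high-level plan is the same as the paper's (construct $M_1,M_2,M_3$, verify the multiplicity bounds, plug into Eqs.~(\ref{sec:upanym.eq.1})--(\ref{sec:upanym.eq.4})), but the ``crucial structural fact'' you build everything on is false. You claim that because $GR$ earns zero from the $oe$-fraction $f$ at time $t$, \emph{every} machine --- in particular $OPT$'s machine --- must already be covered at the same time $t$ when $J$ arrives. The greedy rule only maximizes the \emph{total} profit of $J$, not the profit at each point: $GR$ may choose a machine on which $t$ is occupied but which is free over a longer total portion of $J$'s interval, while another machine is free at $t$ but mostly occupied elsewhere. (Take $J=[0,2]$ with machine~1 carrying a job on $[1,2]$ and machine~2 carrying a job on $[0.4,0.6]$: $GR$ picks machine~2, earns zero at $t=0.5$, yet machine~1 is free at $t=0.5$.) Consequently there is in general no $p$-fraction at time $t$ on $OPT$'s machine to receive the charge, and your pointwise assignment collapses. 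What the greedy rule actually yields is an \emph{aggregate} inequality --- the total length of $p$-intervals on any machine meeting $J$'s interval is at least the total length of $J$'s $oe$-intervals (the content of Lemma~\ref{LMA:peqx}) --- and this is precisely why the paper introduces the cumulative-length correspondence $h_a(J,J',t)$, which sends the $oe$-fraction at $t$ to a $p$-fraction at a \emph{different} time $t_1$ obtained by matching measures, rather than to the same time point.

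Beyond this, the part of the argument that actually produces the constants $2$ and $3$ is left as a sketch that does not match how the paper closes the gap. The paper's Lemma~\ref{LMA:ass} is a supply-versus-demand count at the target point $t_1$: with $x$ $c$-fractions, $y$ $oe$-fractions, $z$ $ge$-fractions and $w\ge y-z$ machines left empty by $GR$ there, the demand is at most $x+y+z$ while the supply is $x+2z+w\ge x+y+z$ (each $c$-fraction usable once in Case~2.1 plus once via $M_3$, each $ge$-fraction twice plus once via $M_3$, and the empty machines supplying further $p$-fractions via another application of $h_a$). Your ``follow the chain of who blocked whom backwards in arrival order'' mechanism is a different device whose termination is easy but whose simultaneous multiplicity bound across $M_1,M_2,M_3$ --- which you yourself flag as the delicate point --- is exactly what is not established. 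As written, the proposal has a genuine gap at both the assignment step and the counting step.
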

%

\subsection{Analysis of $GR$}
\ifnum \count10 > 0
%
%
%
%
$GR$
$p$-interval
%
%
$J$
$h_{a}(J, J', t) = t'$
$t' \in [ r(J'), d(J') ]$
%
$t'$
%
%
$M^{-1}_{i}(f') = \{ f \mid M_{i}(f) = f' \}$
%
%
$M_{1}^{-1}(f') = \varnothing$
{\em 1-assignable}
$M_{2}^{-1}(f') = \varnothing$
{\em 2-assignable}
$M_{2}^{-1}(f') \ne \varnothing$
{\em 1-assignable}
1-assignable
%
%
%
%
Appendix~\ref{ap.sec:assex}
\fi
\ifnum \count11 > 0
%
%
For any job $J$ and any point $t \in [ r(J), d(J) ]$, 
let $E(J, t)$ denote the total length of $oe$-intervals of $J$ in the interval $[ r(J), t ]$. 
For any job $J$, any job $J'$ given before $J$, 
any interval $[ t_{1}, t_{2} ]$ and any $a (\in [1, m])$, 
let $P_{a}(J, J', t_{1}, t_{2})$ denote the total length of $p$-intervals of $GR$'s jobs placed on the $a$th machine 
which are in $[ t_{1}, t_{2} ]$ immediately after $J$ is placed and are not intersecting with any $n$-interval of $J'$. 
For any $a (\in [1, m])$, 
any job $J$, any job $J'$ given before $J$, and 
any point $t \in [ r(J'), d(J') ]$, 
define $h_{a}(J, J', t) = t'$ 
in which $t'$ is the point such that 
$P_{a}( J, J', r(J'), t' ) = E(J', t)$ 
and $t' \in [ r(J'), d(J') ]$ 
immediately after $J$ is placed onto the machine. 
($t'$ exists by Lemma~\ref{LMA:peqx}, 
which is shown later.)
For any $i \in \{ 1, 2, 3 \}$ and any $p$-fraction $f'$, 
define $M^{-1}_{i}(f') = \{ f \mid M_{i}(f) = f' \}$. 
We say that a $c$-fraction $f'$ such that $M_{1}^{-1}(f') = \varnothing$ is {\em 1-assignable}. 
We say that a $ge$-fraction $f'$ such that $M_{2}^{-1}(f') = \varnothing$ is {\em 2-assignable}. 
We say that a $ge$-fraction $f'$ such that $M_{2}^{-1}(f') \ne \varnothing$ and $M_{1}^{-1}(f') = \varnothing$ is {\em 1-assignable}. 
If a $p$-fraction is 1-assignable or 2-assignable, 
we say that it is {\em assignable}. 
Now we give the definition of the routine mentioned in the previous section. 
For better understanding assignments, 
we give examples in Appendix~\ref{ap.sec:assex}. 
\fi
\ifnum \count10 > 0
%
%
\noindent\vspace{-1mm}\rule{\textwidth}{0.5mm} 
\vspace{-3mm}
{{\sc AssignmentRoutine}}\\
\rule{\textwidth}{0.1mm}
	$j$
	${\cal J} := ($$J_{j}$
	$J_{j}$
	\\
	{\bf\boldmath Step~1:}
	$M_{i}(f) := \varnothing$%
	$t_{1} := h_{1}(J_{j}, {J}, t)$%
	%
%
	{\bf\boldmath Step~2:}
%
	\hspace*{1mm}
	{\bf\boldmath Case~2.1 ($t_{1}$
		$f_{1}$
		$M_{1}(f) := f_{1}$%
		%
		$M_{2}(f) := f_{1}$%
		\\
	\hspace*{1mm}
	{\bf\boldmath Case~2.2 ($t_{1}$
		%
		$GR$
		$M^{-1}_{3}(f_{a}) = \varnothing$
		$t_{a} = h_{a}(J_{j}, {J}, t)$
		$M_{3}(f) := f_{a}$%
		$t_{a'}$
		\\
\noindent\vspace{-1mm}\rule{\textwidth}{0.5mm} 
\fi
\ifnum \count11 > 0
%
%
\noindent\vspace{-1mm}\rule{\textwidth}{0.5mm} 
\vspace{-3mm}
{{\sc AssignmentRoutine}}\\
\rule{\textwidth}{0.1mm}
	Consider a moment immediately after the $j$th job $J_{j}$ is placed. 
	${\cal J} := ($the set of $J_{j}$ plus
	each job $J_{j'} \hspace{1mm} (j' \leq j-1)$ whose interval intersects with the interval of $J_{j} )$. 
	For any $oe$-fraction $f$ of each ${J} \in {\cal J}$, 
	execute the following. 
	\\
	{\bf\boldmath Step~1:}
	For each $i \in \{ 1, 2, 3 \}$, 
	$M_{i}(f) := \varnothing$. 
	$t_{1} := h_{1}(J_{j}, {J}, t)$, 
	in which $f$ exists at a point $t$. \\
	{\bf\boldmath Step~2:}
		Execute one of the following two cases. \\
	\hspace*{1mm}
	{\bf\boldmath Case~2.1 (An assignable $p$-fraction $f_{1}$ exists at $t_{1}$):} 
		If $f_{1}$ is 1-assignable, 
		$M_{1}(f) := f_{1}$. 
		Otherwise, 
		if $f_{1}$ is 2-assignable, 
		$M_{2}(f) := f_{1}$. 
		\\
	\hspace*{1mm}
	{\bf\boldmath Case~2.2 (No assignable $p$-fraction exists at $t_{1}$):}
		By Lemma~\ref{LMA:ass}, 
		there exists a $p$-fraction $f_{a}$ at the point $t_{a}$ 
		on some $a (\in \{ 1, m \})$th machine 
		such that $M^{-1}_{3}(f_{a}) = \varnothing$, 
		in which 
		$t_{a} = h_{a}(J_{j}, {J}, t)$. 
		(For any $a' \in \{ 1, m \}$, 
		there exists $t_{a'}$ by Lemma~\ref{LMA:peqx}.) 
		\\
\noindent\vspace{-1mm}\rule{\textwidth}{0.5mm} 
\fi
\ifnum \count10 > 0
%
%
%
Case~2.2
$p$-fraction
\fi
\ifnum \count11 > 0
%
%
In the following, 
we first show the existence of $t_{a}$ in Case~2.2. 
Next, 
we show that there exists $p_{a}$ in Case~2.2. 
That is, 
we prove that the routine can assign a $p$-fraction to each $oe$-fraction. 
\fi
\begin{LMA} \label{LMA:peqx}
	\ifnum \count10 > 0
	%
	%
	$J$
	$h_{a}(J, J', t) = t'$
	$t' \in [ r(J'), d(J') ]$
	%
	
	%
	\fi
	\ifnum \count11 > 0
	%
	%
	For any $a (\in [1, m])$, any job $J$, 
	any job $J'$ which is given before $J$, and 
	any point $t \in [ r(J'), d(J') ]$, 
	there exists the point $t'$ such that 
	$h_{a}(J, J', t) = t'$ 
	and 
	$t' \in [ r(J'), d(J') ]$ 
	immediately after $J$ is placed. 
	\fi
\end{LMA}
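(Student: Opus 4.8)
The plan is to prove the lemma via the intermediate value theorem applied to the function $\psi(s) := P_{a}(J, J', r(J'), s)$ for $s \in [r(J'), d(J')]$. First I would record the easy structural facts: $\psi$ is non-decreasing in $s$ (enlarging the window $[r(J'),s]$ can only add to the measured total length), $\psi(r(J')) = 0$, and $\psi$ is continuous — indeed $1$-Lipschitz, since increasing $s$ by $\delta$ increases $\psi(s)$ by at most $\delta$. Since $E(J',t)$ is likewise $1$-Lipschitz in $t$ and $E(J',t) \le E(J',d(J'))$, the whole lemma reduces to the single inequality
\[
	E(J', d(J')) \le P_{a}(J, J', r(J'), d(J')),
\]
after which the required point $t'$ with $\psi(t') = E(J',t)$ exists by the intermediate value theorem applied to $\psi$ on $[r(J'),d(J')]$.

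To establish that inequality I would first describe the $oe$-intervals of $J'$ cleanly. By the monopolization convention adopted in the preliminaries, at a point $\tau \in [r(J'),d(J')]$ the profit $GR$ gains from $J'$ at $\tau$ is $0$ precisely when some job other than $J'$ is the chronologically first $GR$-job covering $\tau$ on machine $m_{GR}(J')$; a job arriving after $J'$ can never be the chronologically first one, so this owner must have been placed strictly before $J'$, and since the chronologically first job over an occupied point never changes, $\tau$ remains a $p$-fraction of that earlier job on machine $m_{GR}(J')$ for the rest of the input. Writing $B$ for the set of points of $[r(J'),d(J')]$ occupied on machine $m_{GR}(J')$ by jobs placed strictly before $J'$, we thus get that $B$ is the disjoint union of the $oe$-intervals and the $n$-intervals of $J'$, so $E(J',d(J')) = |B| - L$, where $L$ is the total length of the $n$-intervals of $J'$.

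Next I would invoke the greedy rule: since $GR$ placed $J'$ on machine $m_{GR}(J')$, the profit $GR$ would have obtained from $J'$ on machine $a$ at the instant $J'$ was placed is no larger than on machine $m_{GR}(J')$; as that profit equals $d(J')-r(J')$ minus the length of the occupied portion of $[r(J'),d(J')]$ on the machine in question, the occupied portion of $[r(J'),d(J')]$ on machine $a$ at that instant has length at least $|B|$. Finally, because occupied points are permanent and $J'$ is given before $J$, every point occupied on machine $a$ at the instant $J'$ was placed is still occupied — hence still a $p$-fraction of its owner on machine $a$ — at the instant $J$ is placed; discarding the $n$-intervals of $J'$ from both sides then yields $P_{a}(J, J', r(J'), d(J')) \ge |B| - L = E(J', d(J'))$, as needed.

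The step I expect to be the main obstacle is the bookkeeping around the $n$-intervals of $J'$: one has to verify that subtracting them from both sides of the greedy inequality is legitimate (using that the $n$-intervals of $J'$ are contained in $B$ and that their total length $L$ does not depend on the machine), and one has to pin down from the monopolization convention that ``occupied'' and ``$p$-fraction of its owner on machine $a$'' coincide and are monotone in the arrival order. Checking continuity of $\psi$ under the precise reading of ``total length of $p$-intervals $\ldots$ which are in $[t_{1},t_{2}]$'' — that partial overlaps with the window are counted, so that $\psi$ is genuinely $1$-Lipschitz rather than a step function — is a minor but necessary point for the intermediate value theorem to apply.
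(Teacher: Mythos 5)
Your proposal is correct and follows essentially the same route as the paper's proof: the greedy rule forces the occupied (hence $p$-interval) length on machine $a$ within $[r(J'),d(J')]$ to be at least the combined length of the $oe$- and $n$-intervals of $J'$, and discarding the $n$-intervals yields $P_{a}(J,J',r(J'),d(J')) \geq E(J',d(J'))$, from which the point $t'$ exists. You are merely more explicit than the paper about the final step (continuity/monotonicity of $s \mapsto P_{a}(J,J',r(J'),s)$ and the intermediate value theorem) and about the permanence of $p$-fractions under later arrivals, both of which the paper leaves implicit.
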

\begin{proof}
	\ifnum \count10 > 0
	%
	%
	$J, J', t$
	$GR$
	$GR$
	$J'$
	%
	%
	$J'$
	$J'$
	%
	%
	$J'$
	$J'$
	$J'$
	%
	%
	$P_{a}( J, J', r(J'), t' ) = E(J', t)$
	$t' \in [ r(J'), d(J') ]$
	%
	
	%
	\fi
	\ifnum \count11 > 0
	%
	%
	Suppose that $J, J'$ and $t$ satisfy the conditions of the statement of this lemma. 
	By the definition of $GR$, 
	$GR$ chooses the machine $m_{GR}(J')$ when placing $J'$ so that it gains the largest profit from $J'$. 
	That is, 
	$GR$ chooses the machine so that the total length of the intervals of jobs which were already placed before placing $J'$ 
	and which are intersecting with the interval of $J'$ is minimized. 
	%
	%
	Hence, 
	the total length of $p$-intervals on the $a (\in \{ 1, m \})$th machine which are intersecting with the interval of $J'$ is 
	at least the length of $oe$-intervals or $n$-intervals of $J'$. 
	Namely, 
	the total length of $p$-intervals on the $a$th machine which are not in $n$-intervals of $J'$ and are intersecting with the interval of $J'$ is 
	at least the length of $oe$-intervals of $J'$. 
	Therefore, 
	there exists the point $t'$ such that 
	$P_{a}( J, J', r(J'), t' ) = E(J', t)$
	and 
	$t' \in [ r(J'), d(J') ]$. 
	\fi
\end{proof}
\ifnum \count10 > 0
%
%

%

%
\fi
\ifnum \count11 > 0
%
%
\begin{LMA} \label{LMA:ass}
	\ifnum \count10 > 0
	%
	%
	Case~2.2
	%
	%
	$oe$-fraction $f$
	$f$
	$f_{a}$
	Case~2.2
	\fi
	\ifnum \count11 > 0
	%
	%
	Case~2.2 is executable. 
	That is, 
	when Case~2.2 is executed for an $oe$-fraction $f$, 
	$f$ can be assigned a $p$-fraction $f_{a}$ 
	such that 
	$M^{-1}_{3}(f_{a}) = \varnothing$ 
	immediately before executing Case~2.2. 
	\fi
\end{LMA}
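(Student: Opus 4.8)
The plan is to argue in two stages. First I would show that each candidate point $t_a=h_a(J_j,J,t)$ can be taken to carry a $p$-fraction $f_a$, and second I would show, by contradiction, that these $m$ $p$-fractions cannot all already be in the image of $M_3$. Together these give exactly the statement of the lemma.

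For the first stage, I would reuse the argument behind Lemma~\ref{LMA:peqx}. Because $GR$ placed $J$ on the machine minimizing its overlap with previously placed jobs, on \emph{every} machine $a$ the total length of $p$-intervals that lie in $[r(J),d(J)]$ and meet no $n$-interval of $J$ is at least $E(J,d(J))$; moreover $P_a(J_j,J,r(J),\cdot)$ is continuous and nondecreasing and increases only inside precisely those $p$-intervals. Since $f$ is an $oe$-fraction of $J$, the value $E(J,t)$ lies strictly between $0$ and $E(J,d(J))$, so the set of points realizing $P_a(J_j,J,r(J),\cdot)=E(J,t)$ is a nonempty interval whose left endpoint lies in one of those $p$-intervals; I would let $t_a$ be that endpoint. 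Then each $t_a$ carries a $p$-fraction $f_a$, and $f_1,\dots,f_m$ are pairwise distinct because they sit on distinct machines. It remains to produce one $a$ with $M^{-1}_3(f_a)=\varnothing$.

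For the second stage I would assume the contrary, namely $M^{-1}_3(f_a)\neq\varnothing$ for every $a\in[1,m]$. Since $M_3$ is injective (an invariant the routine maintains: $M_i(f)\cap M_i(f')=\varnothing$ for distinct $oe$-fractions), each $M^{-1}_3(f_a)$ is a single $oe$-fraction $g_a$, and $g_1,\dots,g_m$ are pairwise distinct. Tracing the moment the current value $M_3(g_a)=f_a$ was assigned, the routine was then in Case~2.2 for $g_a$ and picked $f_a$ on machine $a$ as the $p$-fraction at $h_a(\cdot,J'_a,\cdot)$, where $g_a$ is an $oe$-fraction of the job $J'_a$; hence $f_a\in[r(J'_a),d(J'_a)]$. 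Since also $f_a\in[r(J),d(J)]$, each $J'_a$ intersects $J$, so all the $g_a$ belong to the cluster of jobs meeting $J$. Now I would compare demand and supply restricted to $\{f_1,\dots,f_m\}$: by assumption all $m$ of these carry an $M_3$-charge, and because we are in Case~2.2 the point $f_1=t_1$ additionally carries an $M_1$-charge (it is not $1$-assignable), so a surplus of demand appears at these points. On the supply side, $GR$'s greedy choice for $J$ guarantees, as in Lemma~\ref{LMA:peqx}, at least $E(J,d(J))$ worth of eligible $p$-fractions on each machine for $J$, and each of the length-preserving injective maps $h_a(\cdot,J'',\cdot)$ used by the routine routes the $oe$-fractions of a cluster job $J''$ into disjoint portions of these; a cluster-wide double count of the charges that the routine can have placed on $f_1,\dots,f_m$ then contradicts this surplus. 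Hence some $f_a$ must in fact be free for $M_3$, proving the lemma.

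The main obstacle is the supply bound in the last step. One must account correctly for ``stale'' assignments — the recomputation triggered by $J_j$ only resets the assignments of jobs whose intervals meet $J_j$, so some $g_a$ may retain a value set during an earlier recomputation — for the fact that $M_1$, $M_2$ and $M_3$ charge the $f_a$ through three different maps, and for the freedom in the choice of the points $t_a$. Making the cluster-wide double count of Lemma~\ref{LMA:peqx} sharp enough that the extra $M_1$-charge forced by Case~2.2 produces a genuine contradiction is the delicate part; everything else is bookkeeping built on the definitions of $h_a$, $P_a$, $E$, and the invariants of the routine.
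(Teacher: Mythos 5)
Your proposal has the right general shape (existence of the candidate points $t_a$ via Lemma~\ref{LMA:peqx}, followed by a supply-versus-demand count), but the core of the lemma --- the actual counting argument showing that some $f_a$ with $M_3^{-1}(f_a)=\varnothing$ must exist --- is not carried out. You explicitly defer it (``making the cluster-wide double count \dots\ sharp enough \dots\ is the delicate part''), and that deferred step is essentially the entire content of the proof. Moreover, the count you envisage is a global one over the whole cluster of jobs meeting $J$, which is harder to control than what is actually needed.

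The paper's proof localizes the count to the single point $t_1$ and rests on three quantitative facts, none of which appears in your proposal. (i) \emph{Demand:} by the injectivity of $h_1$ restricted to the $oe$-fractions of a single job, each job whose interval contains $t_1$ (outside its $n$-intervals) generates at most one assignment request at $t_1$; hence the demand at $t_1$ is at most $x+y+z$, where $x,y,z$ count the jobs with a $c$-, $oe$-, $ge$-fraction at $t_1$. (ii) \emph{Supply:} each $c$-fraction at $t_1$ can absorb one request (via $M_1$) and each $ge$-fraction two (via $M_1$ and $M_2$), giving $x+2z$, and each of the $w$ machines carrying no $GR$-job at $t_1$ contributes one more slot via the redirection to $t_a$ in Case~2.2. (iii) \emph{The bridge:} since $OPT$ must place the $x+y$ jobs with a $c$- or $oe$-fraction at $t_1$ on distinct machines, $x+y\le m$, whence $w\ge y-z$ and supply $\ge x+2z+w\ge x+y+z\ge$ demand. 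Without ingredient (iii) in particular --- the use of $OPT$'s machine bound to lower-bound the number of empty machines --- your intended contradiction cannot be derived, so as written the argument has a genuine gap rather than being a complete alternative route.
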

\begin{proof}
	\ifnum \count10 > 0
	%
	%
	$J'$
	$J$
	Step~2
	%
	$f$
	%
	$t_{1} = h_{1}(J', J, t)$
	%
	
	%
	%
	%
	$J'$
	$x$ ($y$, $z$)
	$OPT$
	$x + y$
	$GR$
	$w$
	\begin{equation} \label{LMA:peqx.eq.1}
		w \geq \max \{ (x + y) - (x + z), 0 \} = \max \{ y - z, 0 \}
	\end{equation}
	%
	%
	$h_{1}$
	1
	%
	%
	%
	$h_{1}$
	%
	%
	$t_{1}$
	$t_{1}$
	%
	
	%
	Step~2
	1
	$oe$-fraction
	(Case~2.1)%
	%
	$t_{1}$
	%
	%
	$t_{1}$
	(Case~2.2)%
	%
	%
	$h_{a}$
	%
	%
	Step~2
	%
	%
	\[
		x + 2z + w \geq x + 2z + y - z = x + y + z
	\]
	%
	%
	$f$
	Case~2.2
	\fi
	\ifnum \count11 > 0
	%
	%
	Suppose that 
	the routine executes Step~2 for a placed job $J'$, 
	a job $J$ given before $J'$ and an $oe$-fraction $f$ of $J$. 
	Also, 
	suppose that 
	$f$ exists at a point $t \in [ r(J), d(J) ]$ and 
	$t_{1} = h_{1}(J', J, t)$ holds. 
	First, 
	we evaluate the number of $p$-fractions located at $t_{1}$ which are required for the execution of Step~2. 
	Let $x$ ($y$, $z$) denote the number of jobs whose $c$-fractions ($oe$-fractions, $ge$-fractions) located at $t_{1}$ immediately after $J'$ is placed. 
	Since $OPT$ places jobs whose $c$-fractions or $oe$-fractions are located at $t_{1}$, the number of machines is at least $x + y$. 
	The number of machines on which $GR$'s jobs are placed at $t_{1}$ is $x + z$. 
	Let $w$ denote the number of machines on which $GR$ places no jobs at $t_{1}$. Then, 
	\begin{equation} \label{LMA:peqx.eq.1}
		w \geq \max \{ (x + y) - (x + z), 0 \} = \max \{ y - z, 0 \}. 
	\end{equation}
	By definition, 
	the value of $h_{1}$ for an $oe$-fraction of a job is different from that for another $oe$-fraction of the job 
	(i.e., the two $oe$-fractions are located at distinct points). 
	Hence, 
	there is a one-to-one correspondence between $oe$-fractions of a job and the values of $h_{1}$ for the job. 
	That is, 
	the number of executions of Step~2 for a job with $t_{1}$ is exactly one. 
	Further, 
	the interval of such job must include $t_{1}$ and 
	$t_{1}$ is not in an $n$-interval of the job by the definition of $h_{1}$. 
	Therefore, 
	the number of $p$-fractions which are located at $t_{1}$ and required for the assignments is 
	at most the number of jobs 
	such that the interval of each of the jobs includes $t_{1}$ and 
	$t_{1}$ is not in an $n$-interval of each of them, 
	that is, at most $x + y + z$. 
	Second, 
	we evaluate the number of $p$-fractions which can be used for the assignments at the execution of Step~2. 
	The numbers of one $c$-fraction and one $ge$-fraction 
	which can be assigned to $oe$-fractions are one and two, respectively 
	(in Case~2.1). 
	Hence, 
	the number of $c$-fractions and $ge$-fractions which are located at $t_{1}$ and used for the assignments is at least $x + 2z$. 
	Also, 
	for the $a$th machine on which $GR$ places no jobs at the execution of Step~2, 
	the routine assigns a $p$-fraction located at the point $t_{a} = h_{a}(J', J, t)$ (in Case~2.2). 
	In the same way as the above argument, 
	by the definition of $h_{a}$, 
	there is a one-to-one correspondence between $p$-fractions of a job and the values of $t_{a}$ for the job 
	and thus, 
	there is a one-to-one correspondence between the $oe$-fractions of $J$ and the $p$-fractions of $J$. 
	By summing up the above numbers, 
	the number of $p$-fractions for the assignments at the execution of Step~2 is at least $x + 2z + w$. 
	Thus, we have 
	\[
		x + 2z + w \geq x + 2z + y - z = x + y + z, 
	\]
	in which the inequality follows from Eq.~(\ref{LMA:peqx.eq.1}). 
	Therefore, 
	when the routine executes Case~2.2 for $f$, 
	it is executable. 
	\fi
\end{proof}
%

\subsection{Upper Bound for $m = 2$}
\ifnum \count10 > 0
%
%
$m=2$
$p$-fractions
$GR$
%
%
%
1
1
%
%
%
\begin{description}
\itemsep=-2.0pt
\setlength{\leftskip}{10pt}
\item[1.]
$ge$-assign
%
\item[2.]
$GR$
%
\item[3.]
%
\end{description}
%
%
$oe$-interval
%
$N_{i}(f) = \varnothing$
%
$N_{1}(f) \cup N_{2}(f) \ne \varnothing$%
$N_{1}(f) \ne \varnothing$
$N_{1}(f) = \{ f_{1}, f_{2}, f_{3}  \}$
$N_{2}(f) \ne \varnothing$
$ge$-fraction $f'$
$N_{2}(f) = f'$
%
%
$N_{i}(f) \cap N_{i}(f') = \varnothing$
$3p$-assign
$ge$-assign
%
%
\[
	V_{\overline{oe}}(\sigma) \leq V_{GR}(\sigma) / 3
\]
\[
	V_{\overline{oe'}}(\sigma) \leq V_{ge}(\sigma)
\]
%
%
\begin{eqnarray*}
	V_{OPT}(\sigma) 
		&=& V_{c}(\sigma) + V_{\overline{oe}}(\sigma) + V_{\overline{oe'}}(\sigma) \\ 
		&\leq& V_{c}(\sigma) + V_{GR}(\sigma) / 3 + V_{ge}(\sigma) 
		= \frac{4}{3} V_{GR}(\sigma) \\
\end{eqnarray*}
%
%
%

%
\fi
\ifnum \count11 > 0
%
%
When $m=2$, 
we also evaluate the competitive ratio of $GR$ by assigning $p$-fractions to all $oe$-fractions. 
In this case, 
we obtain a better upper bound on the competitive ratio of $GR$ than one for general $m$ by implementing more detailed assignments. 
If the routine assigns one $ge$-fraction to one $oe$-fraction, 
we say that the routine {\em $ge$-assigns} the $ge$-fraction to the $oe$-fraction. 
Also, 
if the routine assigns three $p$-fractions to one $oe$-fraction, 
we say that the routine {\em $3p$-assigns} each of the $p$-fractions to the $oe$-fraction. 
We will show the following three properties by the assignments according to the routine defined later: 
\begin{description}
\itemsep=-2.0pt
\setlength{\leftskip}{10pt}
\item[1.]
Each $oe$-fraction is $ge$-assigned or $3p$-assigned, 
\item[2.]
a $c$-fraction of a job given to $GR$ is $3p$-assigned at most once, and 
\item[3.]
a $ge$-fraction is $ge$-assigned at most once and 
is $3p$-assigned at most once. 
\end{description}
We will show them by sequentially constructing two functions $N_{1}$ and $N_{2}$ from $oe$-intervals to $p$-intervals satisfying the following properties: 
Initially, 
for any $oe$-fraction $f$ and any $i \in \{ 1, 2 \}$, 
$N_{i}(f) = \varnothing$. 
At the end of the input, 
for any $oe$-fraction $f$, 
$N_{1}(f) \cup N_{2}(f) \ne \varnothing$. 
There exist three distinct $p$-fractions $f_{1}, f_{2}$ and $f_{3}$ 
such that $N_{1}(f) = \{ f_{1}, f_{2}, f_{3}  \}$ 
if $N_{1}(f) \ne \varnothing$. 
There exists a $ge$-fraction $f'$ 
such that $N_{2}(f) = f'$ 
if $N_{2}(f) \ne \varnothing$. 
For any $oe$-fractions $f$ and $f'(\ne f)$ and any $i \in \{ 1, 2 \}$, 
$N_{i}(f) \cap N_{i}(f') = \varnothing$. 
Let $V_{\overline{oe}}(\sigma)$ denote the total length of $oe$-intervals to which the routine $3p$-assigns, 
and let $V_{\overline{oe'}}(\sigma)$ denote the total length of $oe$-intervals to which the routine $ge$-assigns. 
Thus, 
\[
	V_{\overline{oe}}(\sigma) \leq V_{GR}(\sigma) / 3
\]
and 
\[	
	V_{\overline{oe'}}(\sigma) \leq V_{ge}(\sigma). 
\]
Then, using these inequalities, we have 
\begin{eqnarray*}
	V_{OPT}(\sigma) 
		&=& V_{c}(\sigma) + V_{\overline{oe}}(\sigma) + V_{\overline{oe'}}(\sigma) \\ 
		&\leq& V_{c}(\sigma) + V_{GR}(\sigma) / 3 + V_{ge}(\sigma) 
		= \frac{4}{3} V_{GR}(\sigma). 
\end{eqnarray*}
Therefore, 
we have the following theorem: 
\fi
\begin{theorem}\label{thm:upm2}
	\ifnum \count10 > 0
	%
	%
	$m = 2$
	$GR$
	\fi
	\ifnum \count11 > 0
	%
	%
	When $m = 2$, 
	the competitive ratio of $GR$ is at most $4/3$. 
	\fi
\end{theorem}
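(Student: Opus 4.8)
The plan is to mirror, at finer granularity, the assignment argument that proved Theorem~\ref{thm:upanym}. By the discussion preceding the theorem it suffices to produce the two functions $N_{1}$ and $N_{2}$ from $oe$-intervals to $p$-intervals with the five stated closure properties; once the routine maintaining them is shown never to get stuck, the three bulleted properties follow, giving $V_{\overline{oe}}(\sigma) \le V_{GR}(\sigma)/3$ and $V_{\overline{oe'}}(\sigma) \le V_{ge}(\sigma)$, and then the computation $V_{OPT}(\sigma) = V_{c}(\sigma) + V_{\overline{oe}}(\sigma) + V_{\overline{oe'}}(\sigma) \le V_{c}(\sigma) + V_{GR}(\sigma)/3 + V_{ge}(\sigma) = \tfrac{4}{3} V_{GR}(\sigma)$, together with $V_{GR}(\sigma) = V_{c}(\sigma) + V_{ge}(\sigma)$, finishes the proof. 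So the real work is (a) writing down the $m=2$ routine and (b) proving its feasibility lemma.

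For (a) I would run the routine incrementally exactly as \textsc{AssignmentRoutine}: immediately after $GR$ places the $j$-th job $J_{j}$, take $\cal J$ to be $J_{j}$ together with every earlier job whose interval intersects that of $J_{j}$, and for each $oe$-fraction $f$ of each $J \in {\cal J}$ first clear $N_{1}(f)$ and $N_{2}(f)$, then, writing $t$ for the point of $f$ and $t_{a} := h_{a}(J_{j}, J, t)$ for $a \in \{1,2\}$ (which exist by Lemma~\ref{LMA:peqx}), reassign $f$ as follows: if some $t_{a}$ carries a $ge$-fraction that is not currently $ge$-assigned, pick one such $f'$ and set $N_{2}(f) := f'$; otherwise pick three distinct, currently-$3p$-unassigned $p$-fractions lying at $t_{1}$ and/or $t_{2}$ and put them in $N_{1}(f)$. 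As in the general case, the bijection that $h_{a}$ sets up between the $oe$-fractions of $J$ and a private family of $p$-fractions on machine $a$ inside $J$'s interval is what forces the chosen resources to be disjoint across distinct $oe$-fractions and makes each $oe$-fraction be touched once per increment; this is precisely how the last three closure properties of $N_{1}$ and $N_{2}$ get argued.

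For (b) I would prove the $m=2$ analogue of Lemma~\ref{LMA:ass}: whenever the routine enters the three-$p$-fraction branch for an $oe$-fraction at a point $t$, three unused $p$-fractions are indeed available there. The counting specializes the one in Lemma~\ref{LMA:ass}: letting $x,y,z$ count the jobs whose $c$-, $oe$-, $ge$-fraction lies at $t$ just after the current placement, the jobs from which $OPT$ gains profit at $t$ give $x+y \le 2$, the jobs from which $GR$ gains profit at $t$ give $x+z \le 2$, and every $oe$-fraction at $t$ forces a blocking $GR$-job, hence a $p$-fraction, at $t$ on its own machine, so $x+z \ge 1$ when $y \ge 1$. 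Entering the three-$p$-fraction branch means no $ge$-fraction sits at $t_{1}$ or $t_{2}$; combined with the fact (via Lemma~\ref{LMA:peqx}) that on any machine where $GR$ is idle at $t$ the point $t_{a}$ still lands on a $p$-fraction of $J$ inside $J$'s interval, a short case split over $x \in \{0,1,2\}$ and over whether the two $OPT$-jobs (resp. two $GR$-jobs) at $t$ share a machine should show that the $c$-fractions at $t$ together with the $h_{1}$- and $h_{2}$-reachable $p$-fractions number at least three per $oe$-fraction still needing a $3p$-assignment, again using the $h_{a}$-bijections to keep the accounting private. I expect this case analysis to be the main obstacle: with only two machines the reachable-$p$-fraction slots are scarce, so one must be careful that the same $c$-fraction or $ge$-fraction is never reused across the two "at most once" budgets while juggling $h_{1}$- and $h_{2}$-reachability. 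Everything after the feasibility lemma is the bookkeeping already rehearsed in Section~\ref{sec:ubuc}.
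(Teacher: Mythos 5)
Your high-level scaffolding matches the paper's: construct $N_{1}$ and $N_{2}$, show each $oe$-fraction is $ge$-assigned or $3p$-assigned with each $c$-fraction $3p$-assigned at most once and each $ge$-fraction used at most once in each role, and then the arithmetic $V_{OPT}(\sigma) \le V_{c}(\sigma) + V_{GR}(\sigma)/3 + V_{ge}(\sigma) = \tfrac{4}{3}V_{GR}(\sigma)$ closes the argument. But the entire content of the theorem lives in the feasibility lemma (the $m=2$ analogue of Lemma~\ref{LMA:ass}), and you have not proved it: you end with ``a short case split \ldots should show'' and yourself flag that case analysis as ``the main obstacle.'' In the paper this is Lemma~\ref{LMA:upm2}, proved by induction on the number of jobs with a careful accounting of which $p$-fractions at $t$ and at $t_{1}$ can already have been consumed by earlier $ge$- or $3p$-assignments; nothing in your sketch substitutes for that induction, so the decisive step is missing.

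Moreover, the routine you propose deviates from the paper's in ways that make the feasibility lemma harder, and possibly false, for your version. The paper's Case~2.1 $ge$-assigns the $p$-fraction sitting at the \emph{same point} $t$ as the $oe$-fraction $f$, on $GR$'s machine $m_{GR}(J)$ (such a fraction exists precisely because $f$ is an $oe$-fraction, i.e.\ $J$ is blocked there); this locality is what lets the proof of Lemma~\ref{LMA:upm2} argue that $oe$-fractions not at $t$ are never $ge$-assigned $ge$-fractions at $t$, and then pair off the $oe$- and $ge$-fractions at $t$ one-to-one (their counts coincide once at least three jobs cover $t$ and $m=2$). You instead search for a free $ge$-fraction at $t_{1}$ or $t_{2}=h_{a}(\cdot)$, which decouples the assigned $ge$-fraction's location from $t$ and destroys that counting. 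Likewise, your $3p$-assignment draws all three fractions from $t_{1}$ and $t_{2}$, whereas the paper takes one at $t$ (namely $f'$) and two at the single point $t_{1}$, one per machine; and you re-process the $oe$-fractions of all earlier intersecting jobs at every arrival, whereas \textsc{AssignmentRoutine2} touches only the $oe$-fractions of the newly placed job. Since the fraction categories are fixed at placement time, the reprocessing is unnecessary, and the other two changes would force you to redo the disjointness and ``at most once'' bookkeeping from scratch. As it stands the proposal is an outline of the correct strategy with its key lemma asserted rather than proved.
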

\ifnum \count10 > 0
%
%
$N^{-1}_{i}(f') = \{ f \mid N_{i}(f) = f' \}$
%
%
$N_{1}^{-1}(f') = \varnothing$
{\em 1-assignable}
$N_{2}^{-1}(f') = \varnothing$
{\em 2-assignable}
%
%
%

%
\fi
\ifnum \count11 > 0
%
%
For any $i \in \{ 1, 2 \}$ and any $p$-fraction $f'$, 
define $N^{-1}_{i}(f') = \{ f \mid N_{i}(f) = f' \}$. 
We say that a $p$-fraction $f'$ is {\em 1-assignable} 
if $N_{1}^{-1}(f') = \varnothing$. 
Also, 
we say that a $ge$-fraction $f'$ is {\em 2-assignable} 
if $N_{2}^{-1}(f') = \varnothing$. 
Now we give the definition of the routine to construct the above two functions. 
\fi
\ifnum \count10 > 0
%
%
\noindent\vspace{-1mm}\rule{\textwidth}{0.5mm} 
\vspace{-3mm}
{{\sc AssignmentRoutine2}}\\
\rule{\textwidth}{0.1mm}
%
	%
	%
	$J$
	\\
	{\bf\boldmath Step~1:}
	$m_2 := m_{GR}(J)$
	$m_1 := \{ 1, 2 \} \setminus \{ m_2 \}$
	$t_{1} := h_{m_1}(J, {J}, t)$%
	\\
	{\bf\boldmath Step~2:}
		$m_{2}$
%
	\hspace*{1mm}
	{\bf\boldmath Case~2.1 ($f'$
		$N_{2}(f) := f'$%
		\\
	\hspace*{1mm}
	{\bf\boldmath Case~2.2 (
		$N_{1}(f) := \{ f', f_{1}, f_{2} \}$%
		$f_{1}$
		$f_{2}$
		$f_{2}$
		$f', f_{1}, f_{2}$
		\\
\noindent\vspace{-1mm}\rule{\textwidth}{0.5mm} 
\fi
\ifnum \count11 > 0
%
%
\noindent\vspace{-1mm}\rule{\textwidth}{0.5mm} 
\vspace{-3mm}
{{\sc AssignmentRoutine2}}\\
\rule{\textwidth}{0.1mm}
	Consider a moment immediately after a job $J$ is placed. 
	For any $oe$-fraction $f$ of ${J}$, 
	execute the following. 
	\\
	{\bf\boldmath Step~1:}
	$m_2 := m_{GR}(J)$ and 
	$m_1 := \{ 1, 2 \} \setminus \{ m_2 \}$. 
	$t_{1} := h_{m_1}(J, {J}, t)$, 
	in which $f$ exists at a point $t$. 
	\\
	{\bf\boldmath Step~2:}
		Let $f'$ be the $p$-fraction at $t$ on the $m_{2}$th machine 
		($f'$ exists by the definition of $oe$-fractions). 
		Execute one of the following two cases. \\
	\hspace*{1mm}
	{\bf\boldmath Case~2.1 ($f'$ is 2-assignable):} 
		$N_{2}(f) := f'$. 
		\\
	\hspace*{1mm}
	{\bf\boldmath Case~2.2 (Otherwise):}
		$N_{1}(f) := \{ f', f_{1}, f_{2} \}$, 
		in which 
		$f_{1}$ is the $p$-fraction at $t_{1}$ on $GR$'s $m_1$th machine
		($f_{1}$ exists by Lemma~\ref{LMA:peqx}), and 
		$f_{2}$ is the $p$-fraction at $t_{1}$ on $GR$'s $m_2$th machine
		($f_{2}$ exists because the interval of $J$ contains $t_{1}$ by the definition of $h_{m_1}$). 
		(By Lemma~\ref{LMA:upm2}, 
		$f', f_{1}$ and $f_{2}$ are 1-assignable.)
		\\
\noindent\vspace{-1mm}\rule{\textwidth}{0.5mm} 
\fi

\begin{LMA} \label{LMA:upm2}
	\ifnum \count10 > 0
	%
	%
	Case~2.2
	%
	$oe$-fraction $f$
	$f$
	%
	
	%
	\fi
	\ifnum \count11 > 0
	%
	%
	Case~2.2 is executable. 
	That is, 
	when Case~2.2 is executed for an $oe$-fraction $f$, 
	$f$ can be assigned 3 $p$-fractions 
	(i.e., $3p$-assigned)
	each of which is 1-assignable immediately before executing Case~2.2. 
	\fi
\end{LMA}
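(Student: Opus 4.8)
The plan is to check three things about the fractions $f'$, $f_1$, $f_2$ that Case~2.2 wants to use for an $oe$-fraction $f$ of the just-placed job $J$ (here $f$ sits at a point $t$, $m_2 := m_{GR}(J)$, $m_1$ is the other machine, and $t_1 := h_{m_1}(J,J,t)$, following the routine's notation): that they are well-defined, that they are pairwise distinct, and that each is $1$-assignable at the moment Case~2.2 is entered, i.e. $N_1^{-1}(f') = N_1^{-1}(f_1) = N_1^{-1}(f_2) = \varnothing$.

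Well-definedness and distinctness are short. Since $f$ is an $oe$-fraction, $GR$ gains nothing from $J$ at $t$ on $m_2$, so $J$ is not the chronologically first $GR$-job at $t$ there; the chronologically first such job was placed earlier and carries a $c$- or $ge$-fraction at $t$ — that is $f'$, a $p$-fraction. The fraction $f_1$ exists by Lemma~\ref{LMA:peqx} with $a = m_1$, and $f_2$ exists because $t_1 \in [r(J), d(J)]$ and, by the definition of $h_{m_1}$, $t_1$ lies in no $n$-interval of $J$, so some $GR$-job on $m_2$ (possibly $J$ itself) is chronologically first at $t_1$. For distinctness I use $m = 2$: since $m_1 \ne m_2$, $f_1$ is separated from both $f'$ and $f_2$; and $f' \ne f_2$ because they sit at the two points $t$ and $t_1$ of the same machine with $t \ne t_1$, the inequality $t \ne t_1$ being the one place needing a brief argument from the definition of $h_{m_1}$ (together with $t$ being interior to an $oe$-interval of $J$).

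The core is $1$-assignability, which I would prove by contradiction. Suppose one of $f', f_1, f_2$ — call it $f^*$, a $p$-fraction at a point $s$ on a machine $\mu$ — already lies in $N_1(g)$ for an $oe$-fraction $g$ processed before $f$, belonging to a job $J^g$ placed no later than $J$. By the definition of Case~2.2 applied to $g$, $f^*$ must be exactly one of: the $p$-fraction at $g$'s point $t^g$ on $m_{GR}(J^g)$; the $p$-fraction at $t_1^g := h_{m_1^g}(J^g,J^g,t^g)$ on $m_1^g$; or the $p$-fraction at $t_1^g$ on $m_{GR}(J^g)$. If $J^g = J$, this is impossible by an internal disjointness computation: distinct $oe$-fractions of $J$ sit at distinct points, the assignment $t \mapsto t_1$ is strictly increasing since $E(J,\cdot)$ is strictly increasing across $oe$-mass and $P_{m_1}(J,J,r(J),\cdot)$ strictly increasing across $p$-mass, and hence the triples produced for different $oe$-fractions of $J$ are pairwise disjoint. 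If $J^g$ was placed strictly before $J$, I would run the case analysis on the role of $f^*$, in each case using the two structural facts that hold only because $m = 2$: at any point at most two jobs are chronologically first in $OPT$ and at most two in $GR$, so with $x, y, z$ the numbers of jobs carrying a $c$-, $oe$-, $ge$-fraction at that point we have $x + y \le 2$ and $x + z \le 2$; and, crucially, the fact that Case~2.2 (not Case~2.1) is being applied to $f$, which means $f'$ is not $2$-assignable, hence $f'$ is either a $c$-fraction or a $ge$-fraction already in the image of $N_2$.

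I expect the main obstacle to be the sub-cases in which $f^*$ is a $ge$-fraction: there the capacity bounds alone do not suffice, since a $ge$-fraction's owner need not be first in $OPT$, and one must instead follow the $N_2$-bookkeeping — a $ge$-fraction reused by $N_1$ is already $N_2$-assigned, so, using the one-to-one correspondence from the proof of Lemma~\ref{LMA:peqx} between a job's $oe$-fractions and the $p$-fractions it consumes, one can pin down the unique earlier $oe$-fraction responsible and reach a contradiction. Keeping the $N_1$-accounting, the $N_2$-accounting, and the $m = 2$ capacity bounds simultaneously consistent across all of these sub-cases is where the bulk of the work lies.
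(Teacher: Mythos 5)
Your setup (well-definedness and distinctness of $f'$, $f_{1}$, $f_{2}$, then $1$-assignability) correctly identifies what must be shown, and your capacity bounds $x+y\le 2$, $x+z\le 2$ for $m=2$ are exactly the right tool. But the proposal stops at the point where the lemma actually lives: the $1$-assignability argument is announced ("I would run the case analysis on the role of $f^{*}$", "one can pin down the unique earlier $oe$-fraction responsible and reach a contradiction") rather than carried out, and you yourself flag the $ge$-fraction sub-cases as an unresolved obstacle where "the bulk of the work lies". That unresolved part is the content of the lemma, so as it stands this is a plan, not a proof.

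The missing idea, which the paper's induction supplies, is that $1$-assignability of the triple is not proved in isolation but through the dichotomy with Case~2.1. Concretely, the paper first establishes two structural facts: (i) if an $oe$-fraction $f''$ of $J''$ located at $t''\ne t$ is $3p$-assigned $p$-fractions at $t$, then it is the \emph{only} such fraction (any later job covering $t$ has $t$ in an $n$-interval), it consumes exactly two of the $p$-fractions at $t$, and since then at least three jobs cover $t$ with $m=2$, a $ge$-fraction must exist at $t$; and (ii) an $oe$-fraction located at $t$ itself is never $3p$-assigned fractions at $t$, because the coexistence of two $oe$-fractions at $t$ forces a $2$-assignable $ge$-fraction there and Case~2.1 fires. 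From these one shows that in every situation where the $p$-fractions at $t$ have already been touched (whether $ge$-assigned or $3p$-assigned), a $2$-assignable $ge$-fraction remains at $t$ and the routine executes Case~2.1 for $f$ --- so Case~2.2 is only ever reached when $f'$ is untouched, and the same facts applied at $t_{1}$ show $f_{1}$ and $f_{2}$ are untouched too. Your direct contradiction on "who previously consumed $f^{*}$" would have to rediscover exactly this interplay between the two cases; without it, the $ge$-fraction sub-cases you worry about do not close, because a $ge$-fraction at $t$ being already $N_{2}$-assigned is precisely the situation in which you must argue that Case~2.2 would not have been entered (or that the earlier consumer took a different fraction), and nothing in the proposal pins that down.
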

\begin{proof}
	\ifnum \count10 > 0
	%
	%
	%
	$J$
	%
	%
	%
	$m_{GR}(J)$
	%
	%
	$t$
	$J''$
	$f''$
	{\bf (i) $t'' \ne t$
	$t'' \ne t$
	$t = h_{m'_{1}}(J'', J'' t'')$
	%
	%
	$J$
	Case~2.2
	$3p$-assign
	%
	%
	$\hat{J}$
	%
	%
	Case~2.2
	$\hat{J}$
	%
	%
	$3p$-assign
	%
	%
	$m = 2$
	$t$
	{\bf (ii) $t'' = t$
	$t$
	$m=2$
	$t$
	%
	%
	%
	$f''$
	Case~2.1
	Case~2.2
	%
	$J$
	%
	%
	$f$
	%
	
	%
	%
	%
	%
	%
	$t$
	$m=2$
	$t$
	Case~2.1
	$t$
	%
	$2$-assignable
	%
	
	%
	$t$
	$t$
	%
	$t$
	$J$
	$f$
	%
	
	%
	$t$
	%
	$J$
	$t_{1} = h_{m_{1}}(J, J, t)$
	$m_{1} = \{ 1, 2 \} \setminus m_{GR}(J)$
	%
	%
	(i)
	$t_{1}$
	$J$
	$t_{1}$
	(ii)
	$t_{1}$
	%
	$t_{1}$
	$f'$
	\fi
	\ifnum \count11 > 0
	%
	%
	We prove the lemma by induction on the number of given jobs. 
	The statement of the lemma is clearly true before the first job is given. 
	We assume that Case~2.2 is executable immediately before a job $J$ is given, and show that it is executable for $J$ as well. 
	Then, 
	suppose that the routine executes Step~2 for an $oe$-fraction $f$ of a job $J$ at a point $t$. 
	Let $f'$ be the $p$-fraction of a job $J'$ which is given before $J$ at $t$ on the $m_{GR}(J)$th machine. 
	First of all, 
	consider an $oe$-fraction $f''$ of a job $J''$ to which a $p$-fraction at $t$ is $3p$-assigned. 
	Suppose that $f''$ is located at a point $t''$. 
	{\bf (i) ($t'' \ne t$):}
	Since $t'' \ne t$, 
	$t = h_{m'_{1}}(J'', J'' t'')$, 
	in which $m'_{1} = \{ 1, 2 \} \setminus m_{GR}(J'')$. 
	Suppose that $\hat{J} (\ne J'')$ is a job given before $J$ and after $J''$. 
	When Case~2.2 is executed for $J''$, 
	there exists jobs including $t$ on the both machines, 
	which are used for $3p$-assignments to $J''$. 
	It follows that 
	if the interval of $\hat{J}$ contains $t$, 
	then $t$ is in an $n$-interval of $\hat{J}$. 
	Thus, 
	an $oe$-fraction not at $t$ of $\hat{J}$ is not $3p$-assigned $p$-fractions at $t$ by the definition of Case~2.2 of the routine. 
	Hence, in this case, 
	the number of $p$-fractions at $t$ used for $3p$-assignments is exactly two 
	by the definition of Case~2.2. 
	Then, 
	there exist at least three jobs (i.e., $J,J'$ and $J''$) whose intervals contain $t$ 
	and $m=2$, 
	which means that 
	there exists at least one $ge$-fraction at $t$. 
	{\bf (ii) ($t'' = t$):}
	Since both $f$ and $f''$ exist at $t$ and 
	$m=2$, 
	at least one $ge$-fraction exists at $t$. 
	Also, 
	all the $oe$-fractions at $t$ are only $f$ and $f''$. 
	Thus, 
	the routine $ge$-assigns to $f''$ by executing Case~2.1. 
	That is, Case~2.2 is not executed. 
	Hence, 
	the routine does not $3p$-assign $p$-fractions at $t$ to an $oe$-fraction of a job given before $J$ at $t$. 
	Thus, 
	$f$ is $1$-assignable. 
	Now we are ready to show that 
	it is possible for the routine to execute Case~2.2 for $f$. 
	First, 
	we discuss the case in which $p$-fractions at $t$ are $ge$-assigned. 
	Since the number of jobs whose intervals contain $t$ is at least three and $m=2$, 
	the number of $ge$-fractions at $t$ is equal to that of $oe$-fractions at $t$. 
	By the definition of Case~2.1, 
	$oe$-fractions not at $t$ are not $ge$-assigned $ge$-fractions at $t$. 
	Hence, 
	there exists at least one $ge$-fraction at $t$ which is $2$-assignable and 
	Case~2.1 is executed for $f$. 
	Second,  
	we discuss the case in which $p$-fractions at $t$ are not $ge$-assigned. 
	We first consider the case in which $p$-fractions at $t$ are $3p$-assigned. 
	By the above discussion (i), 
	there exists at least one $ge$-fraction at $t$. 
	When $J$ is given, 
	the $ge$-fraction has not been used for $ge$-assignments yet and 
	the routine $ge$-assigns to $f$ in Case~2.1.  
	Finally, 
	we consider the case in which $p$-fractions at $t$ are not $3p$-assigned.
	In this case, 
	$f'$ is $1$-assignable when $J$ is given. 
	Let $t_{1} = h_{m_{1}}(J, J, t)$, 
	in which 
	$m_{1} = \{ 1, 2 \} \setminus m_{GR}(J)$. 
	By the above (i), 
	an $oe$-fraction which can be $3p$-assigned $p$-fractions at $t_{1}$ is only $f$ of $J$ 
	(i.e., 
	even if the interval of a job with another $oe$-fraction contains $t_{1}$, 
	$t_{1}$ is in an $n$-interval of the job).  
	An $oe$-fraction at $t_{1}$ is not $3p$-assigned a $p$-fraction at $t_{1}$ by above (ii). 
	Therefore, 
	$p$-fractions at $t_{1}$ are $1$-assignable and 
	Case~2.2 is executable for $f$ together with the $1$-assignability of $f'$. 
	\fi
\end{proof}
\ifnum \count10 > 0
%
%
%

%
\fi
\ifnum \count11 > 0
%
%
We show that our analysis of $GR$ for $m = 2$ is tight in the following theorem. 
\fi
\begin{theorem}\label{thm:VL.LB2GR}
	\ifnum \count10 > 0
	%
	%
	$m = 2$
	$GR$
	\fi
	\ifnum \count11 > 0
	%
	%
	When $m = 2$, 
	for any $\varepsilon > 0$, 
	the competitive ratio of $GR$ is at least $4/3 - \varepsilon$. 
	\fi
\end{theorem}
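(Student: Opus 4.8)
The plan is to exhibit, for every $\varepsilon>0$, one finite input $\sigma$ (in fact using only three jobs) on which $V_{OPT}(\sigma)/V_{GR}(\sigma)\ge 4/3-\varepsilon$. The only tools needed are the two facts recalled in Section~\ref{sec:ubuc}: in the uniform profit case the profit of any algorithm equals the sum, over the two machines, of the total length of the union of the intervals it places there; and $GR$ puts each arriving job on the machine where its \emph{new coverage} — the length of the part of its interval not yet covered on that machine by earlier jobs — is the larger, breaking ties toward machine~$1$.

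Fix a parameter $\delta\in(0,1)$ and present, in this order, the three jobs (with profits equal to their lengths, as required in this section)
\[
	J_1=[0,\,1+\delta],\qquad J_2=[1-\delta,\,2],\qquad J_3=[0,\,2].
\]
I would trace $GR$ step by step. Job $J_1$ hits two empty machines, so the tie is broken toward machine~$1$. Job $J_2$ overlaps $J_1$ on machine~$1$ in $[1-\delta,1+\delta]$, so its new coverage on machine~$1$ is $1-\delta$, whereas on the still-empty machine~$2$ it is $1+\delta$; hence $GR$ puts $J_2$ on machine~$2$. Finally $J_3$ has new coverage $2-(1+\delta)=1-\delta$ on machine~$1$ and, since machine~$2$ now covers $[1-\delta,2]$, new coverage $1-\delta$ on machine~$2$ as well; the two values are equal, so the tie sends $J_3$ to machine~$1$. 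Thus $GR$ finishes with machine~$1$ covering $[0,1+\delta]\cup[0,2]=[0,2]$ and machine~$2$ covering $[1-\delta,2]$, so $V_{GR}(\sigma)=2+(1+\delta)=3+\delta$.

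To lower-bound $V_{OPT}$ I would simply describe one offline placement: put $J_3$ on machine~$1$, giving coverage $[0,2]$ of length $2$, and put $J_1$ and $J_2$ together on machine~$2$, where $[0,1+\delta]\cup[1-\delta,2]=[0,2]$ (because $1-\delta\le 1+\delta$) again has length $2$. Hence $V_{OPT}(\sigma)\ge 4$, and therefore
\[
	\frac{V_{OPT}(\sigma)}{V_{GR}(\sigma)}\;\ge\;\frac{4}{3+\delta}.
\]
Given $\varepsilon>0$, one then picks $\delta>0$ small enough that $4/(3+\delta)\ge 4/3-\varepsilon$ (possible since $4/(3+\delta)\to 4/3$ as $\delta\to 0^{+}$), which proves the theorem; letting $\delta\to 0$ in fact shows the competitive ratio of $GR$ is at least $4/3$.

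There is no deep obstacle here beyond locating the configuration: the whole argument reduces to a three-job case analysis. The one place that must be handled with care is the behaviour of $GR$ on the ties — both for $J_1$ on the empty machines and, crucially, for $J_3$, whose new-coverage values on the two machines are made to be \emph{exactly} $1-\delta$ so that the ``minimum index'' convention is the only thing forcing it onto machine~$1$. It is also worth verifying (a one-line check) that exactly the same placements and the same profits arise under the literal proportional-sharing definition of the model, not only under the monopolization convention of Section~\ref{sec:ubuc}, so that the lower bound is valid for the problem as originally stated.
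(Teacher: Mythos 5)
Your proof is correct and takes essentially the same approach as the paper: an explicit small adversarial instance on which $GR$ achieves total coverage $3+\delta$ while $OPT$ achieves $4$, giving ratio $4/(3+\delta)\to 4/3$. The paper uses four jobs ($[0,1],[2-\epsilon,2],[1,2],[0,2]$) where you use three, but the mechanism is identical, your trace of $GR$ (including the tie for $J_3$, which in fact does not even affect the total since $J_3=[0,2]$ covers everything on whichever machine it lands) is valid under both the literal proportional-sharing definition and the monopolization convention, and the resulting bound matches the paper's.
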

\begin{proof}
	\ifnum \count10 > 0
	%
	%
	%
	1
	$r(J_1) = 0$
	2
	$r(J_2) = 2 - \epsilon$
	3
	$r(J_3) = 1$
	4
	$r(J_4) = 0$
	$GR$
	$J_1$
	$J_3$
	$OPT$
	$J_1$
	$J_4$
	%
	%
	\[
		\frac{ V_{OPT}(\sigma) }{ V_{GR}(\sigma) } 
			= \frac{ 4 }{ 3 + \epsilon } = \frac{4}{3} - \varepsilon
	\]
	$\varepsilon = 4 \epsilon / (9 + 3\epsilon)$
	\fi
	\ifnum \count11 > 0
	%
	%
	Consider the following input $\sigma$. 
	The first job $J_1$ such that 
	$r(J_1) = 0$ and $d(J_1) = 1$ is given. 
	The second job $J_2$ such that 
	$r(J_2) = 2 - \epsilon$ and $d(J_2) = 2$ is given, 
	where $0 < \epsilon < 1$. 
	The third job $J_3$ such that 
	$r(J_3) = 1$ and $d(J_3) = 2$ 
	and 
	the fourth job $J_4$ such that 
	$r(J_4) = 0$ and $d(J_4) = 2$
	are given. 
	$GR$ places $J_1$ and $J_2$ on the first machine, and 
	then places $J_3$ and $J_4$ on the second machine. 
	%
	%
	$OPT$ places 
	$J_1$, $J_2$ and $J_3$ on one machine 
	and 
	places $J_4$ on the other machine. 
	Thus, 
	\[
		\frac{ V_{OPT}(\sigma) }{ V_{GR}(\sigma) } 
			= \frac{ 4 }{ 3 + \epsilon } = \frac{4}{3} - \varepsilon, 
	\]
	in which 
	$\varepsilon = 4 \epsilon / (9 + 3 \epsilon)$. 
	\fi
\end{proof}

\section{Lower Bounds for Uniform Profit Case} \label{sec:lbuc}
\ifnum \count10 > 0
%
%
%
%
$m=2$

\fi
\ifnum \count11 > 0
%
%
In this section, 
we show lower bounds on the competitive ratios of online algorithms for the uniform profit case. 
For better understanding, 
we first consider the case of $m=2$. 
\fi

\begin{theorem}\label{thm:VL.LB}
	\ifnum \count10 > 0
	%
	%
	$m = 2$
	%
	
	%
	\fi
	\ifnum \count11 > 0
	%
	%
	When $m = 2$, 
	the competitive ratio of any deterministic online algorithm is at least $(10 - \sqrt{2}) / 7 \geq 1.226$. 
	\fi
\end{theorem}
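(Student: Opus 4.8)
The plan is to exhibit, for an arbitrary deterministic online algorithm $ON$, an adaptively chosen input $\sigma$ with $V_{OPT}(\sigma)/V_{ON}(\sigma)\ge (10-\sqrt{2})/7$. The first step is to use the reduction established in the Preliminaries: in the uniform profit case the profit any algorithm collects on a machine equals the Lebesgue measure of the union of the intervals it places there. Thus $V_{ON}(\sigma)$ and $V_{OPT}(\sigma)$ are just sums of two ``covered lengths'', and $V_{OPT}(\sigma)$ is the maximum of this quantity over all $2$-colourings of the released jobs. Consequently it suffices to track, after each released job, the covered set on each of $ON$'s two machines, and, at the moment we stop the input, to exhibit one explicit good $2$-colouring witnessing a lower bound on $V_{OPT}(\sigma)$.

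The adversary is a finite decision tree whose branching is governed by $ON$'s irrevocable placement choices and which carries a single real parameter $x$, fixed at the end. It starts by releasing two identical unit jobs; $ON$ must put them on different machines, since otherwise one machine stays empty and the ratio is already $2$, so without loss of generality both machines now cover the same unit interval. From this symmetric position the adversary releases a short ``forking'' job and immediately a suitably sized second copy of it (here the lengths, together with the freedom to release jobs out of release-time order, are used so that $OPT$ can still cover strictly more on both machines than $ON$). Whenever $ON$ answers a fork with a placement that leaves one of its two machines strictly behind, the adversary stops and the branch is analysed; in the remaining branch, where $ON$ keeps its machines ``balanced'', the adversary continues with a further, enlarged fork. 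The lengths are chosen so that this recursion terminates after a bounded number of rounds, and every leaf is then analysed directly.

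At each leaf we read off $V_{ON}$ from the two covered lengths along the branch and lower-bound $V_{OPT}$ by an explicit $2$-colouring (typically: one machine takes the long job of the last fork, the other takes the chain of short jobs tiling the same interval). This produces, for the two extreme branches, ratios of the form $A_1(x)/B_1(x)$ and $A_2(x)/B_2(x)$ with one non-increasing and the other non-decreasing in $x$, while the intermediate branches are dominated by these two. Equating them, $A_1(x)B_2(x)=A_2(x)B_1(x)$, gives a quadratic in $x$; the construction is calibrated so that the relevant root is $x=2-\sqrt{2}$, for which the common ratio is $\frac{6-2\sqrt{2}}{4-\sqrt{2}}=\frac{10-\sqrt{2}}{7}$. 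With this $x$, every leaf of the tree then has ratio at least $(10-\sqrt{2})/7$, which proves the theorem.

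The main obstacle is completeness of the case analysis, namely ensuring $ON$ has no ``safe'' move: in particular the branch in which $ON$ repeatedly re-balances its two machines must not let the ratio drift down to $1$, which is exactly why the forking jobs must be scaled up from round to round and why the arrival order is permuted so that $OPT$'s favourable colouring is already determined before $ON$ sees the decisive job. The remaining, routine steps are verifying in each leaf that the exhibited colouring is legal and attains the claimed covered lengths (hence really lower-bounds $V_{OPT}$), checking that the two extreme branches indeed dominate the others, and confirming that the choice $x=2-\sqrt{2}$ keeps every released interval non-degenerate.
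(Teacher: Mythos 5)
There is a genuine gap here: the construction you describe is never pinned down, and the parts you do specify appear to give $ON$ a safe escape. Your opening move --- two \emph{identical} unit jobs, forcing $ON$ into a symmetric position where both machines cover the same interval $[0,1]$ --- throws away the adversary's leverage. From that symmetric state, if each subsequent ``fork'' is a job together with a copy of it (or any pair that $ON$ can split one-per-machine), then the branch in which $ON$ always separates the pair keeps both of $ON$'s machines covering exactly the union of all released intervals, which is the best any two-colouring can do; the ratio on that branch is $1$, the recursion never has to terminate, and no choice of the scaling parameter fixes this. You acknowledge this danger (``the branch in which $ON$ repeatedly re-balances \dots must not let the ratio drift down to $1$'') but offer no mechanism that prevents it. Since the forking jobs, the recursion depth, the domination of intermediate leaves, and the two ``extreme'' ratios $A_i(x)/B_i(x)$ are all left unspecified, what remains is a proof template whose one concrete ingredient does not work. (Your arithmetic $\tfrac{6-2\sqrt{2}}{4-\sqrt{2}}=\tfrac{10-\sqrt{2}}{7}$ is correct, but that only shows the target value is consistent with \emph{some} pair of branch ratios, not that your adversary realizes them.)

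The paper's proof needs no recursion and only a depth-one case split, but its first two jobs are \emph{time-disjoint}: $J_1=[0,1]$ and $J_2=[1+x,2+x]$. The decisive question is whether $ON$ co-locates these two disjoint jobs. If $ON$ separates them, a single spanning job $J_3=[0,2+x]$ follows; wherever $ON$ puts $J_3$ it wastes one unit of overlap, giving $3+x$ versus $OPT$'s $4+x$ (pair $J_1,J_2$ on one machine, $J_3$ alone on the other). If $ON$ co-locates them, two mutually overlapping jobs $J_1'=[1-x,1+x]$ and $J_2'=[1,1+2x]$ follow, straddling the endpoints $1$ and $1+x$; $ON$ gets at most $2+3x$ while $OPT$ gets $2+4x$ by pairing $J_1$ with $J_2'$ and $J_2$ with $J_1'$. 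Equating $\tfrac{4+x}{3+x}$ with $\tfrac{2+4x}{2+3x}$ gives $x=\sqrt{2}$ and the bound $\tfrac{10-\sqrt{2}}{7}$. The essential idea you are missing is that the gap $[1,1+x]$ between the two disjoint initial jobs is what makes both of $ON$'s options costly; a symmetric start with identical jobs provides no such gap.
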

\begin{proof}
	\ifnum \count10 > 0
	%
	%
	%
	$r(J_1) = 0$
	2
	$r(J_2) = 1+x$
	%
	%
	$ON$
	%
	
	%
	%
	%
	$ON$
	%
	3
	%
	%
	%
	%
	$ON$
	$V_{ON}(\sigma_1) = 2 + x + 1 = 3 + x$
	%
	%
	$ON$
	%
	%
	$OPT$
	$J_3$
	%
	$V_{OPT}(\sigma_1) = 2 + 2 + x = 4 + x$
	%
	%
	\begin{equation} \label{thm:VL.LB.eq.1}
		\frac{ V_{OPT}(\sigma_1) }{ V_{ON}(\sigma_1) } = \frac{ 4 + x }{ 3 + x }
	\end{equation}
	%
	
	%
	$ON$
	3
	4
	%
	%
	%
	%
	$ON$
	$J'_1$
	\begin{equation} \label{thm:VL.LB.eq.2}
		V_{ON}(\sigma_2) = 1 + x + 1 + x + y = 2 + 2x + y
	\end{equation}
	%
	%
	$J'_2$
	$ON$
	%
	\begin{equation} \label{thm:VL.LB.eq.3}
		V_{ON}(\sigma_2) = 2 + x + 2y
	\end{equation}
	%
	%
	$y = x$
	$V_{ON}(\sigma_2) = 2 + 3x$
	%
	%
	$OPT$
	$J_2$
	%
	$V_{OPT}(\sigma_2) = 2 (1 + x + y) = 2 + 4x$
	%
	%
	%
	\begin{equation} \label{thm:VL.LB.eq.4}
		\frac{ V_{OPT}(\sigma_2) }{ V_{ON}(\sigma_2) } = \frac{ 2 + 4x }{ 2 + 3x }
	\end{equation}
	%
	
	%
	\[
		\frac{ V_{OPT}(\sigma) }{ V_{ON}(\sigma) } 
			\geq \min \left \{ \frac{ 4 + x }{ 3 + x }, \frac{ 2 + 4x }{ 2 + 3x } \right \} 
			= \frac{ 4 + \sqrt{2} }{ 3 + \sqrt{2} }
			= \frac{ 10 - \sqrt{2} }{ 7 } 
	\]
	%
	
	%
	\fi
	\ifnum \count11 > 0
	%
	%
	Consider an online algorithm $ON$. 
	The first given job is $J_1$ such that 
	$r(J_1) = 0$ and $d(J_1) = 1$. 
	The second job is $J_2$ such that 
	$r(J_2) = 1+x$ and $d(J_2) = 2+x$. 
	Note that $x$ is set later. 
	Without loss of generality, 
	we may assume that both $ON$ and $OPT$ place $J_1$ onto the first machine. 
	In the following, 
	we use two inputs. 
	First, 
	we consider the case where 
	$ON$ places $J_1$ and $J_2$ on two different machines.
	That is, suppose that $ON$ places $J_{2}$ on the second machine. 
	Then, 
	the third job $J_3$ such that $r(J_3) = 0$ and $d(J_3) = 2 + x$ is given, 
	and no further job arrives.
	We call this input $\sigma_1$. 
	If $ON$ places $J_3$ onto the first machine, 
	we have $V_{ON}(\sigma_1) = 2 + x + 1 = 3 + x$. 
	$ON$ also gains the same profit  
	if $ON$ places $J_3$ onto the second machine. 
	On the other hand, 
	the machine onto which $OPT$ places both $J_1$ and $J_2$ 
	is different from that onto which $J_3$ is placed. 
	Thus, 
	$V_{OPT}(\sigma_1) = 2 + 2 + x = 4 + x$. 
	By the above argument, 
	\begin{equation} \label{thm:VL.LB.eq.1}
		\frac{ V_{OPT}(\sigma_1) }{ V_{ON}(\sigma_1) } = \frac{ 4 + x }{ 3 + x }. 
	\end{equation}
	Second, 
	we consider the case where 
	$ON$ places $J_1$ and $J_2$ onto the first machine. 
	The third job $J'_1$ such that $r(J'_1) = 1 - y$ and $d(J'_1) = 1 + x$ and 
	the fourth job $J'_2$ such that $r(J'_2) = 1$ and $d(J'_2) = 1 + x + y$ are given, 
	where $y$ is fixed later. 
	No further job is given; 
	we call this input $\sigma_2$. 
	We first consider the case where 
	$ON$ places $J'_1$ and $J'_2$ on different machines. 
	If $J'_1$ is placed onto the first machine, 
	on which $J_1$ and $J_2$ are placed, 
	\begin{equation} \label{thm:VL.LB.eq.2}
		V_{ON}(\sigma_2) = 1 + x + 1 + x + y = 2 + 2x + y. 
	\end{equation}
	$ON$ gains the same profit 
	if $J'_2$ is placed onto the first machine. 
	Next, 
	we consider the case in which 
	$ON$ places $J'_1$ and $J'_2$ onto the machine. 
	If the machine is the second one, 
	then it is clear that $ON$ gains larger profits 
	than it does in the other case. 
	%
	Hence, 
	\begin{equation} \label{thm:VL.LB.eq.3}
		V_{ON}(\sigma_2) = 2 + x + 2y.
	\end{equation}
	Now, 
	set $y = x$ and 
	we have 
	$V_{ON}(\sigma_2) = 2 + 3x$
	by Eqs.~(\ref{thm:VL.LB.eq.2}) and (\ref{thm:VL.LB.eq.3}). 
	On the other hand, 
	$OPT$ places both $J_1$ and $J'_2$ onto the first machine 
	and
	both $J_2$ and $J'_1$ onto the second machine. 
	Thus, 
	$V_{OPT}(\sigma_2) = 2 (1 + x + y) = 2 + 4x$. 
	By the above argument, 
	\begin{equation} \label{thm:VL.LB.eq.4}
		\frac{ V_{OPT}(\sigma_2) }{ V_{ON}(\sigma_2) } = \frac{ 2 + 4x }{ 2 + 3x }. 
	\end{equation}
	Therefore, 
	by Eqs.~(\ref{thm:VL.LB.eq.1}) and (\ref{thm:VL.LB.eq.4}), 
	\[
		\frac{ V_{OPT}(\sigma) }{ V_{ON}(\sigma) } 
			\geq \min \left \{ \frac{ 4 + x }{ 3 + x }, \frac{ 2 + 4x }{ 2 + 3x } \right \} 
			= \frac{ 4 + \sqrt{2} }{ 3 + \sqrt{2} }
			= \frac{ 10 - \sqrt{2} }{ 7 }, 
	\]
	where we choose $x = \sqrt{2}$. 
	\fi
\end{proof}
\ifnum \count10 > 0
%
%
%

%
\fi
\ifnum \count11 > 0
%
%
The following theorem provides lower bounds for $m \geq 3$ by generalizing the input used to prove Theorem~\ref{thm:VL.LB}. 
\fi
\begin{theorem}\label{thm:VL.LBm}
	\ifnum \count10 > 0
	%
	%
	%
	
	%
	\fi
	\ifnum \count11 > 0
	%
	%
	The competitive ratio of any deterministic algorithm is at least $1.101$. 
	It is better for fixed $m$ and then refer to Table~\ref{tab:thm:VL.LBm} for details. 
	\fi
\end{theorem}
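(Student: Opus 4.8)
The plan is to generalize the adaptive adversary of Theorem~\ref{thm:VL.LB}. For general $m$ I would first release a \emph{base layer} of $m$ unit-length jobs whose intervals are pairwise disjoint but separated by gaps of a common length $x>0$, so that $OPT$ can pack all of them onto a single machine and collect profit exactly $m$, and then branch on how $ON$ distributes them over its $m$ machines. The key quantity is how many of these base jobs $ON$ chooses to ``concentrate'' (place so as to share a long window of a single machine, leaving its gaps uncovered) versus ``spread'' over fresh machines. Mirroring the two branches of the $m=2$ proof, against a spread-out configuration the adversary finishes by releasing one long job that covers the whole active window: $ON$ can only earn the part of it lying in the gaps it left open, while $OPT$ devotes one otherwise-empty machine to it and earns its full length; against a concentrated configuration the adversary instead releases, for each open gap, a pair of short jobs that straddle that gap (the analogues of $J'_1,J'_2$ in Theorem~\ref{thm:VL.LB}), which $ON$ cannot place without incurring overlap on some machine, whereas $OPT$ can shift the base jobs by one position so that each machine stays overlap-free.

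Carrying this out, I would organize $ON$'s responses into a case analysis and argue that, for the adversary's accounting, an optimal $ON$ may be assumed to use only ``canonical'' placements --- keep a running crowded machine, or open one new machine, or split a cluster over as few new machines as possible --- using the monotonicity of per-job profit (it never increases as more overlapping jobs arrive) together with the greedy-style observation, already exploited for $GR$, that a machine minimizing added overlap is the relevant one. In each canonical case one writes down $V_{ON}(\sigma)$ and $V_{OPT}(\sigma)$ as explicit functions of the gap parameter $x$ (and, for the terminal ``concentrate'' case, of a second shift parameter $y$ as in Theorem~\ref{thm:VL.LB}) and of an integer $k\le m$ recording how far the concentration was pushed; the adversary then picks the input realizing the worst such ratio, and one optimizes over $x,y$ and $k$. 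Balancing the two families of ratios is exactly what produces the algebraic numbers in Table~\ref{tab:thm:VL.LBm}: the $\sqrt{2}$ for $m=2,4,6$, the $\sqrt{7}$ for $m=5$, the rational value $7/6$ for $m=3$, and, taking $m\to\infty$ and optimizing $k$, the uniform bound $(48-2\sqrt2)/41\ge 1.101$ stated in Table~\ref{tab:OurRes}.

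I expect the main obstacle to be precisely this reduction to canonical responses: with $m$ machines there are exponentially many ways for $ON$ to place each released cluster, and one must show that no ``scattered'' placement does better for $ON$ than a canonical one --- intuitively because scattering wastes a machine that the adversary will immediately charge via a covering job, while over-concentrating lets the adversary keep recursing. Once this is established the remaining steps are routine: the profit computations for each terminal configuration and the parameter optimization, which I would present case by case for small $m$ and as the limiting case for $m=\infty$, relegating the arithmetic to the table as in the statement.
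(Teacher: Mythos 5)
Your construction diverges from the paper's at the very first step, and the divergence is fatal rather than cosmetic. You release $m$ base jobs with pairwise disjoint intervals separated by gaps, so that ``$OPT$ can pack all of them onto a single machine and collect profit exactly $m$.'' But in this model the total profit is just the measure of time during which at least one job is present, summed over machines, so a set of pairwise disjoint jobs yields total profit exactly $m$ under \emph{every} placement; $ON$'s response to your base layer is profit-neutral and creates no committed loss. The follow-ups then fail quantitatively. In your ``spread'' branch (one base job per machine), a single long job covering the whole window of length $L=m+(m-1)x$ overlaps, on whichever machine $ON$ chooses, only the one unit job already there, so $ON$ earns $L-1$ from it --- not merely the gap measure $(m-1)x$ as you claim --- and the resulting ratio $\frac{2m+(m-1)x}{2m-1+(m-1)x}$ tends to $1$. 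In your ``concentrated'' branch (all base jobs on one machine), $m-1$ machines are completely free, so $ON$ absorbs the straddling pairs there at full profit. Neither branch extracts a constant-factor loss, let alone the specific constants of Table~\ref{tab:thm:VL.LBm}; the assertion that the optimization ``produces exactly'' those algebraic numbers is not derivable from this construction.

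The paper's proof also sidesteps the canonical-placement reduction that you correctly identify as the main obstacle, and it does so by design rather than by a lemma. It releases the base layer as two waves of $\lceil m/2\rceil$ \emph{identical} jobs each, occupying only the two slots $[0,1]$ and $[1+x,2+x]$. Because jobs within a wave are indistinguishable, the only statistics of $ON$'s placement that matter are the four machine counts $a,b,b',c$ (machines receiving jobs from both waves, from one only, from the other only, from neither), so there is no exponential case analysis and no recursion --- the adversary is only two rounds deep, unlike the recursive adversary of Theorem~\ref{thm:GLB2}. The committed quantity is $a$: input $\sigma_1$ ($\lfloor m/2\rfloor$ jobs covering $[0,2+x]$) penalizes one extreme of $a$, input $\sigma_2$ ($\lfloor m/2\rfloor$ straddling pairs around the midpoint, exactly as in Theorem~\ref{thm:VL.LB}) penalizes the other, and $\max_x\min_a\max\{\cdot,\cdot\}$ yields the table. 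To salvage your outline you would need to replace the disjoint base layer with waves of coincident jobs so that $ON$ is forced into a commitment that is actually costly later.
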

\begin{proof}
	\ifnum \count10 > 0
	%
	%
	%
	%
	$m'$
	$x$
	%
	%
	$ON$
	$ON$%
	$S_1$
	$S_2$
	$ON$%
	$S_2$
	$S_1$
	$ON$%
	$S_1$
	$A,B,B',C$
	%
	%
	\begin{equation} \label{thm:VL.LBm.eq.1}
		a + b + b' + c = m
	\end{equation}
	\begin{equation} \label{thm:VL.LBm.eq.1.1}
		a + b \leq m'
	\end{equation}
	\begin{equation} \label{thm:VL.LBm.eq.1.2}
		a + b' \leq m'
	\end{equation}
	%
	%
	
	%
	%
	%
	%
	$\lfloor \frac{m}{2} \rfloor (= m'')$
	job $J_{3,j} \hspace{1mm} (j = 1, \ldots, m'')$
	%
	%
	
	%
	$a \leq m'$
	$b + b' + c = m - a \geq m - m' = m''$
	%
	$S_3$
	%
	%
	$C$
	%
	$B$
	%
	%
	\begin{eqnarray*} 
		V_{ON}(\sigma_1) 
			&\leq& 2 a + b + b' + (2 + x) \min \{ c, m'' \} + (1 + x) \max \{ m'' - c, 0 \}\\
			&=& a + m - c + (2 + x) \min \{ c, m'' \} + (1 + x) \max \{ m'' - c, 0 \} \mbox{\hspace{6mm} (
	\end{eqnarray*}
	%
	%
	$c \geq m''$
	\begin{equation} \label{thm:VL.LBm.eq.2.1}
		V_{ON}(\sigma_1) 
			\leq a + m - c + (2 + x) m'' + (1 + x) 0 
			\leq a + m - m'' + (2 + x) m'' 
			= a + m + (1 + x) m''
	\end{equation}
	%
	%
	$c < m''$
	\begin{equation} \label{thm:VL.LBm.eq.2.2}
		V_{ON}(\sigma_1) 
			\leq a + m - c + (2 + x) c + (1 + x) ( m'' - c )
			= a + m + (1 + x) m''
	\end{equation}
	%
	
	%
	$OPT$
	$J_{1, j}$
	$J_{3, j}$
	%
	%
	\begin{equation} \label{thm:VL.LBm.eq.3}
		V_{OPT}(\sigma_1) = 2 m' + (2 + x)m''
	\end{equation}
	%
	
	%
	%
	%
	$J_{2,j} \hspace{1mm} (j = 1, \ldots, m')$
	$m''$
	%
	$m''$
	%
	%
	$ON$%
	$S'_1$
	$S'_2$
	$\tilde{b}$
	$B'$($B$)
	$S_1$
	%
	%
	$ON$%
	$S'_1 \cup S'_2$
	$\tilde{c}_1$
	%
	$ON$
	$S'_1 \cup S'_2$
	$\tilde{c}_2$
	%
	$ON$
	$ON$%
	$S'_1 \cup S'_2$
	$\tilde{a}$
	%
	$ON$
	%
	%
	\begin{equation} \label{thm:VL.LBm.eq.4}
		\tilde{b} \leq b + b'
	\end{equation}
	\begin{equation} \label{thm:VL.LBm.eq.5}
		\tilde{c}_1 + \tilde{c}_2 \leq c = m - a - b - b'
	\end{equation}
	\begin{equation} \label{thm:VL.LBm.eq.6}
		\tilde{a} \leq 2m'' - \tilde{b} - \tilde{c}_1 - 2 \tilde{c}_2
	\end{equation}
	%
	%
	\begin{eqnarray} \label{thm:VL.LBm.eq.7}
		V_{ON}(\sigma_2) 
			&\leq& 2 (a - \tilde{a}) + b + b' + 2 x \tilde{b} + 2 x \tilde{c}_1 + 3x \tilde{c}_2 + (2 + x) \tilde{a} \nonumber \\
			&\leq& 2 m' + x (\tilde{b} + \tilde{c}_1 + \tilde{c}_2) + 2xm''  \mbox{\hspace{6mm} (
			&\leq& 2 m' + x (m - a) + 2xm'' \mbox{\hspace{6mm} (
	\end{eqnarray}
	%
	
	%
	$OPT$
	$J_{1, j}$
	$J'_{1, j}$
	%
	%
	\begin{equation} \label{thm:VL.LBm.eq.8}
		V_{OPT}(\sigma_2) = m + 4x m''
	\end{equation}
	%
	
	%
	$m$
	\begin{equation} \label{thm:VL.LBm.eq.even1}
		\frac{ V_{OPT}(\sigma_1) }{ V_{ON}(\sigma_1) } 
			\geq \frac{ 2 m' + (2 + x)m'' }{ a + m + (1 + x) m'' }
			= \frac{ 2 m + xm/2 }{ a + 3m/2 + xm/2 }
	\end{equation}
	\begin{equation} \label{thm:VL.LBm.eq.even2}
		\frac{ V_{OPT}(\sigma_2) }{ V_{ON}(\sigma_2) } 
			\geq \frac{ m + 4xm'' }{ 2 m' + x (m - a) + 2xm'' }
			= \frac{ m + 2xm }{ m + (2m - a)x }
	\end{equation}
	%
	$m$
	$m' = (m + 1)/2$
	\begin{equation} \label{thm:VL.LBm.eq.odd1}
		\frac{ V_{OPT}(\sigma_1) }{ V_{ON}(\sigma_1) } 
			\geq \frac{ 2 m + x(m - 1)/2 }{ a + (3m - 1)/2 + x(m - 1)/2 }
	\end{equation}
	\begin{equation} \label{thm:VL.LBm.eq.odd2}
		\frac{ V_{OPT}(\sigma_2) }{ V_{ON}(\sigma_2) } 
			\geq \frac{ m + 2x(m - 1) }{ m + 1 + (2m - a - 1)x }
	\end{equation}
	%
	%
	\begin{equation} \label{thm:VL.LBm.eq.9}
		\frac{ V_{OPT}(\sigma) }{ V_{ON}(\sigma) } 
			\geq \max_{x} \min_{a} \max 
				\left \{ 
					\frac{ V_{OPT}(\sigma_1) }{ V_{ON}(\sigma_1) }, 
					\frac{ V_{OPT}(\sigma_2) }{ V_{ON}(\sigma_2) } 
				\right \}
	\end{equation}
	%
	
	\fi
	\ifnum \count11 > 0
	%
	%
	Consider an online algorithm $ON$. 
	First, 
	$\lceil \frac{m}{2} \rceil (= m')$ jobs $J_{1,j} \hspace{1mm} (j = 1, \ldots, m')$ such that $r(J_{1,j}) = 0$ and $d(J_{1,j}) = 1$ are given. 
	Let $S_1$ denote the set of these jobs.
	Next, 
	$m'$ jobs $J_{2,j} \hspace{1mm} (j = 1, \ldots, m')$ such that $r(J_{2,j}) = 1 + x$ and $d(J_{2,j}) = 2 + x$ arrive. 
	$x$ is set later. 
	Let $S_2$ denote the set of these jobs. 
	Let $A$ be the set of machines onto each of which $ON$ places at least one job from $S_1$ and $S_2$. 
	%
	Let $B$ be the set of machines onto which $ON$ places at least one job from $S_1$ but $ON$ does not place a job from $S_2$. 
	Let $B'$ be the set of machines onto which $ON$ places at least one job from $S_2$ but $ON$ does not place a job from $S_1$. 
	Let $C$ be the set of machines onto which $ON$ places no jobs from either $S_1$ or $S_2$. 
	Let the number of machines in $A,B,B'$ and $C$ denote $a, b, b'$ and $c$, respectively. 
	Then, we have 
	\begin{equation} \label{thm:VL.LBm.eq.1}
		a + b + b' + c = m, 
	\end{equation}
	\begin{equation} \label{thm:VL.LBm.eq.1.1}
		a + b \leq m', 
	\end{equation}
	and 
	\begin{equation} \label{thm:VL.LBm.eq.1.2}
		a + b' \leq m'. 
	\end{equation}
	In the following, 
	we provide two inputs and first consider input $\sigma_1$. 
	$\lfloor \frac{m}{2} \rfloor (= m'')$ jobs $J_{3,j} \hspace{1mm} (j = 1, \ldots, m'')$ such that $r(J_{3,j}) = 0$ and $d(J_{3,j}) = 2 + x$ are issued. 
	Let $S_3$ be the set of these jobs. 
	Since $a \leq m'$ by definition, 
	$b + b' + c = m - a \geq m - m' = m''$ 
	by Eq.~(\ref{thm:VL.LBm.eq.1}). 
	Thus, 
	$ON$ can place each job in $S_3$ onto each machine in $B \cup B' \cup C$. 
	In this way, 
	$ON$ gains more profits than placing the jobs onto machines in $A$. 
	Then, 
	$ON$ places each of $\min \{ c, m'' \}$ jobs from $S_3$ onto each machine in $C$,  
	and places the remaining $\max \{ m'' - c, 0 \}$ jobs onto machines from either $B$ or $B'$. 
	Thus, 
	\begin{eqnarray*} 
		V_{ON}(\sigma_1) 
			&\leq& 2 a + b + b' + (2 + x) \min \{ c, m'' \} + (1 + x) \max \{ m'' - c, 0 \}\\
			&=& a + m - c + (2 + x) \min \{ c, m'' \} + (1 + x) \max \{ m'' - c, 0 \} \mbox{\hspace{6mm} (by Eq.~(\ref{thm:VL.LBm.eq.1}))}. 
	\end{eqnarray*}
	When $c \geq m''$, 
	\begin{equation} \label{thm:VL.LBm.eq.2.1}
		V_{ON}(\sigma_1) 
			\leq a + m - c + (2 + x) m'' + (1 + x) 0 
			\leq a + m - m'' + (2 + x) m'' 
			= a + m + (1 + x) m'', 
	\end{equation}
	where the second inequality follows from $c \geq m''$. 
	When $c < m''$, 
	\begin{equation} \label{thm:VL.LBm.eq.2.2}
		V_{ON}(\sigma_1) 
			\leq a + m - c + (2 + x) c + (1 + x) ( m'' - c )
			= a + m + (1 + x) m''. 
	\end{equation}
	On the other hand, 
	for each $j = 1, \ldots, m'$, 
	$OPT$ places both $J_{1, j}$ and $J_{2, j}$ onto the $j$th machine, 
	and places each $J_{3, j}$ onto each of the remaining machines. 
	Hence, 
	\begin{equation} \label{thm:VL.LBm.eq.3}
		V_{OPT}(\sigma_1) = 2 m' + (2 + x)m''. 
	\end{equation}
	Second, 
	consider the input $\sigma_2$. 
	After $J_{2,j} \hspace{1mm} (j = 1, \ldots, m')$ are given, 
	$m''$ jobs $J'_{1,j} \hspace{1mm} (j = 1, \ldots, m'')$ arrive 
	such that $r(J'_{1,j}) = 1 - x$ and $d(J'_{1,j}) = 1 + x$. 
	Let $S'_1$ denote the set of these jobs. 
	Then, 
	$m''$ jobs $J'_{2,j} \hspace{1mm} (j = 1, \ldots, m'')$  arrive 
	such that 
	$r(J'_{2,j}) = 1$ and $d(J'_{2,j}) = 1 + 2x$. 
	Let $S'_2$ denote the set of these jobs. 
	$\tilde{b}$ denotes the number of machines from $B'$ onto which $ON$ places at least one job from $S'_1$ 
	plus the number of machines from $B$ onto which $ON$ places at least one job from $S'_2$. 
	If $ON$ places at least one job from $S'_1$ ($S'_2$)
	onto a machine from $B'$ ($B$), 
	then $ON$ gains the profit of $2x$ per machine
	in addition to the profits of jobs in $S_1$ and $S_2$. 
	Let $\tilde{c}_1$ denote the number of machines from $C$ 
	each of which $ON$ places one job from $S'_1 \cup S'_2$ onto. 
	Then, 
	$ON$ gains the profit of $2x$ per machine. 
	Let $\tilde{c}_2$ denote the number of machines from $C$ 
	each of which $ON$ places at least two jobs from $S'_1 \cup S'_2$ onto. 
	Then, 
	$ON$ gains the profit of at most $3x$ per machine. 
	Let $\tilde{a}$ denote the number of machines from $A$ 
	each of which $ON$ places at least one job from $S'_1 \cup S'_2$ onto. 
	Then, 
	$ON$ gains the profit of $2+x$ per machine. 
	By the above definitions and Eq.~(\ref{thm:VL.LBm.eq.1}), 
	\begin{equation} \label{thm:VL.LBm.eq.4}
		\tilde{b} \leq b + b', 
	\end{equation}
	\begin{equation} \label{thm:VL.LBm.eq.5}
		\tilde{c}_1 + \tilde{c}_2 \leq c = m - a - b - b', 
	\end{equation}
	and 
	\begin{equation} \label{thm:VL.LBm.eq.6}
		\tilde{a} \leq 2m'' - \tilde{b} - \tilde{c}_1 - 2 \tilde{c}_2. 
	\end{equation}
	Thus, we have 
	\begin{eqnarray} \label{thm:VL.LBm.eq.7}
		V_{ON}(\sigma_2) 
			&\leq& 2 (a - \tilde{a}) + b + b' + 2 x \tilde{b} + 2 x \tilde{c}_1 + 3x \tilde{c}_2 + (2 + x) \tilde{a}  \nonumber \\
			&\leq& 2 m' + x (\tilde{b} + \tilde{c}_1 + \tilde{c}_2) + 2xm'' \mbox{\hspace{6mm} (by Eqs.~(\ref{thm:VL.LBm.eq.1.1}), (\ref{thm:VL.LBm.eq.1.2}), and (\ref{thm:VL.LBm.eq.6}))}  \nonumber \\
			&\leq& 2 m' + x (m - a) + 2xm''.  \mbox{\hspace{6mm} (by Eqs.~(\ref{thm:VL.LBm.eq.4}) and (\ref{thm:VL.LBm.eq.5}))} 
	\end{eqnarray}
	On the other hand, 
	for every $j = 1, \ldots, m''$, 
	$OPT$ places both $J_{1, j}$ and $J'_{2, j}$ onto one machine.
	Additionally, 
	$OPT$ places both $J'_{1, j}$ and $J_{2, j}$ onto the remaining machines. 
	%
	%
	Hence, 
	\begin{equation} \label{thm:VL.LBm.eq.8}
		V_{OPT}(\sigma_2) = m + 4x m''. 
	\end{equation}
	If $m$ is even, 
	then $m' = m'' = m/2$. 
	By Eqs.~(\ref{thm:VL.LBm.eq.2.1}), (\ref{thm:VL.LBm.eq.2.2}), and (\ref{thm:VL.LBm.eq.3}), 
	\begin{equation} \label{thm:VL.LBm.eq.even1}
		\frac{ V_{OPT}(\sigma_1) }{ V_{ON}(\sigma_1) } 
			\geq \frac{ 2 m' + (2 + x)m'' }{ a + m + (1 + x) m'' }
			= \frac{ 2 m + xm/2 }{ a + 3m/2 + xm/2 }. 
	\end{equation}
	By Eqs.~(\ref{thm:VL.LBm.eq.7}) and (\ref{thm:VL.LBm.eq.8}), 
	\begin{equation} \label{thm:VL.LBm.eq.even2}
		\frac{ V_{OPT}(\sigma_2) }{ V_{ON}(\sigma_2) } 
			\geq \frac{ m + 4xm'' }{ 2 m' + x (m - a) + 2xm'' }
			= \frac{ m + 2xm }{ m + (2m - a)x }. 
	\end{equation}
	If $m$ is odd, 
	$m' = (m + 1)/2$ and $m'' = (m - 1)/2$. 
	By Eqs.~(\ref{thm:VL.LBm.eq.2.1}), (\ref{thm:VL.LBm.eq.2.2}), and (\ref{thm:VL.LBm.eq.3}), 
	\begin{equation} \label{thm:VL.LBm.eq.odd1}
		\frac{ V_{OPT}(\sigma_1) }{ V_{ON}(\sigma_1) } 
			\geq \frac{ 2 m + x(m - 1)/2 }{ a + (3m - 1)/2 + x(m - 1)/2 }, 
	\end{equation}
	and by Eqs.~(\ref{thm:VL.LBm.eq.7}) and (\ref{thm:VL.LBm.eq.8}), 
	\begin{equation} \label{thm:VL.LBm.eq.odd2}
		\frac{ V_{OPT}(\sigma_2) }{ V_{ON}(\sigma_2) } 
			\geq \frac{ m + 2x(m - 1) }{ m + 1 + (2m - a - 1)x }. 
	\end{equation}
	Therefore, 
	\begin{equation} \label{thm:VL.LBm.eq.9}
		\frac{ V_{OPT}(\sigma) }{ V_{ON}(\sigma) } 
			\geq \max_{x} \min_{a} \max 
				\left \{ 
					\frac{ V_{OPT}(\sigma_1) }{ V_{ON}(\sigma_1) }, 
					\frac{ V_{OPT}(\sigma_2) }{ V_{ON}(\sigma_2) } 
				\right \}
	\end{equation}
	and 
	we can have the lower bounds in Table~\ref{tab:thm:VL.LBm} 
	using Eqs.~(\ref{thm:VL.LBm.eq.even1}),(\ref{thm:VL.LBm.eq.even2}),(\ref{thm:VL.LBm.eq.odd1}),(\ref{thm:VL.LBm.eq.odd2}) and (\ref{thm:VL.LBm.eq.9}) 
	for each $m$. 
	\fi
\end{proof}
\begin{table*}
\begin{center}
	\renewcommand{\arraystretch}{1.5}
	\caption{Lower bounds for each $m(\geq 3)$.}
	\begin{tabular}{|l|l|l|l|}
		\hline
			 $m$ & Lower Bound & $a$ & $x$ \\
		\hline
			3 & $7/6 \geq 1.166$ & 0 & 3 \\ 
		\hline
			4 & $\frac{22-2\sqrt{2}}{17} \geq 1.127$ & 0 & $2+2\sqrt{2}$ \\ 
		\hline
			5 & $\frac{420-15\sqrt{7}}{333} \geq 1.142$ & 1 & $\frac{-1 + \sqrt{7}}{2}$\\ 
		\hline
			6 & $\frac{51-6\sqrt{2}}{41} \geq 1.140$ & 2 & $\sqrt{2}$ \\ 
		\hline
			7 & $\frac{280-70\sqrt{11}}{227} \geq 1.158$ & 1 & $\frac{1+\sqrt{11}}{2}$ \\ 
		\hline
	\end{tabular}
	\begin{tabular}{|l|l|l|l|}
		\hline
			 $m$ & Lower Bound & $a$ & $x$ \\
		\hline
			8 & $28/25 \geq 1.12$ & 2 & $2/3$ \\
		\hline
			9 & $9/8 \geq 1.125$ & 2 & 1 \\
		\hline
			10 & $\frac{290-15\sqrt{2}}{239} \geq 1.124$ & 3 & $\sqrt{2}$ \\ 
		\hline
			11 & $\frac{704-11\sqrt{22}}{582} \geq 1.120$ & 2 & $\frac{1+\sqrt{22}}{3}$ \\ 
		\hline
			$\infty$ & $\frac{48 - 2 \sqrt{2}}{41} \geq 1.101$ & $m/4$ & $\sqrt{2}$ \\ 
		\hline
	\end{tabular}
	\label{tab:thm:VL.LBm}
\end{center}
\end{table*}
\ifnum \count10 > 0
%
%
%

%
%

%
\fi
\ifnum \count11 > 0
%
%

%
\fi
\section{Conclusions} \label{Concl}
\ifnum \count10 > 0
%
%
%
%
$m = 2$
%
%
%
%
%
(i) 
preemption
preemption
general profit case
(ii)
%
%
(iii)
uniform profit case
%
%

%
\fi
\ifnum \count11 > 0
%
%
In this paper, 
we have proposed a novel variant of the interval scheduling problem focusing on best-effort services. 
For this variant, 
we have proved that the competitive ratios of an online greedy algorithm are at most $4/3$ and $3$ for $m = 2$ and $m \geq 3$, respectively. 
Also, 
we have shown a lower bound on the competitive ratio of any deterministic algorithm for each $m$. 
We finish the paper by providing some open questions: 
 (i) 
In the setting studied in the paper, 
preemption is not allowed. 
Then, 
if preemption is allowed, 
can we design a competitive algorithm for the general profit case? 
(ii)
Will randomization help to improve our results?
(iii)
An obvious open problem is to close the gaps between our lower and upper bounds for the uniform profit case.
In addition, 
we should discuss offline algorithms for our variant. 
\fi
%

%

\newpage
\appendix

\section{Assignment Examples} \label{ap.sec:assex}
\ifnum \count10 > 0
%
%
$c$-interval
$ge$-interval
$oe$-interval
$n$-interval
%
%
$GR$
$GR$
$GR$
$GR$
%
%
$GR$
$OPT$
%
$GR$
$GR$
%
%
%
\ref{fig:fig_assex1}
$GR$
\fi
\ifnum \count11 > 0
%
%
In all the figures of this section, 
$c$-intervals ($ge$-intervals, $oe$-intervals and $n$-interval, respectively)
are shown in blue (green, red and yellow, respectively) squares. 
Also, 
only $GR$'s jobs and machines are shown.
Our assignments are realized as matchings 
between $oe$-intervals of $OPT$'s jobs and $p$-intervals of $GR$'s jobs. 
However, for ease of presentation, 
the assignments are presented as matchings 
between $oe$-intervals of ``$GR$'s jobs'' and $p$-intervals of $GR$'s jobs. 
For example, 
in the situation of Fig.~\ref{fig:fig_assex1}, 
the routine assigns 
the interval $[1, 2]$ of $GR$'s $J_{1}$ 
to 
the interval $[0, 1]$ of $OPT$'s $J_{4}$ in fact. 
However, 
the assignment is described as the one from 
the interval $[1, 2]$ of $GR$'s $J_{1}$ 
to 
the interval $[0, 1]$ of $GR$'s $J_{4}$ 
in the figure. 
In addition, 
we use situations which cannot happen to explain assignments. 
For example, in Fig.~\ref{fig:fig_assex1}, 
$GR$ cannot place $J_{2}$ onto the second machine according to its definition
but places it onto the first machine. 
\fi
\subsection{General $m$} \label{ap.sec:anym}
\ifnum \count10 > 0
%
%
$OPT$
$GR$
%
$OPT$
$GR$
$OPT$
$GR$
%
$J_{4}$
$J_{5}$%
$GR$
\fi
\ifnum \count11 > 0
%
%
In Fig.~\ref{fig:fig_assex1}, 
at first, 
the routine assigns 
the $c$-interval $[1, 2]$ of $GR$'s $J_{1}$,
which is placed on the first machine, 
to 
the $oe$-interval $[0, 1]$ of $OPT$'s $J_{4}$ 
in the left figure. 
However, 
after $GR$ places $J_{5}$ onto the second machine, 
the routine reassigns 
the $c$-interval $[1, 2]$ of $GR$'s $J_{5}$ 
to 
the $oe$-interval $[0, 1]$ of $OPT$'s $J_{4}$, 
and assigns 
the $c$-interval $[1, 2]$ of $GR$'s $J_{1}$ 
to 
the $oe$-interval $[0, 1]$ of $OPT$'s $J_{5}$ 
in the right figure. 
Of course, it is possible that 
the routine does not change the assignment of $J_{4}$ and 
assigns 
the $c$-interval $[1, 2]$ of $J_{5}$ 
to 
the $oe$-interval $[0, 1]$ of $J_{5}$.
\fi
\ifnum \count12 > 0
\begin{figure*}[ht]
	 \begin{center}
	  \includegraphics[width=130mm]{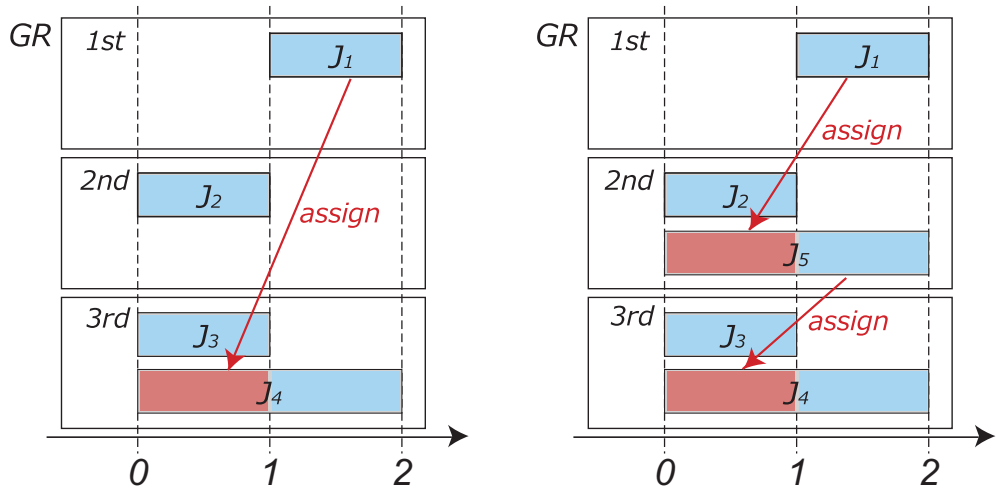}
	 \end{center}
	 \caption{
\ifnum \count10 > 0
%
%
%
%
\fi
\ifnum \count11 > 0
%
%
Assignment example 1 for general $m$.
\fi
			}
	\label{fig:fig_assex1}
\end{figure*}
\fi
\ifnum \count10 > 0
%
%
$J_{4}$
$J_{1}$
$J_{4}$
$J_{4}$
$J_{2}$
\fi
\ifnum \count11 > 0
%
%
In Fig.~\ref{fig:fig_assex2}, 
the routine assigns 
the $c$-interval $[0, 1]$ of $J_{1}$ 
to 
the $oe$-interval $[0, 1]$ of $J_{4}$. 
Since the $n$-interval $[2, 3]$ of $J_{4}$ exists, 
the routine assigns 
the $c$-interval $[3, 4]$ of $J_{2}$ 
to 
the $oe$-interval $[1, 2]$ of $J_{4}$. 
\fi
\ifnum \count12 > 0
\begin{figure*}[ht]
	 \begin{center}
	  \includegraphics[width=130mm]{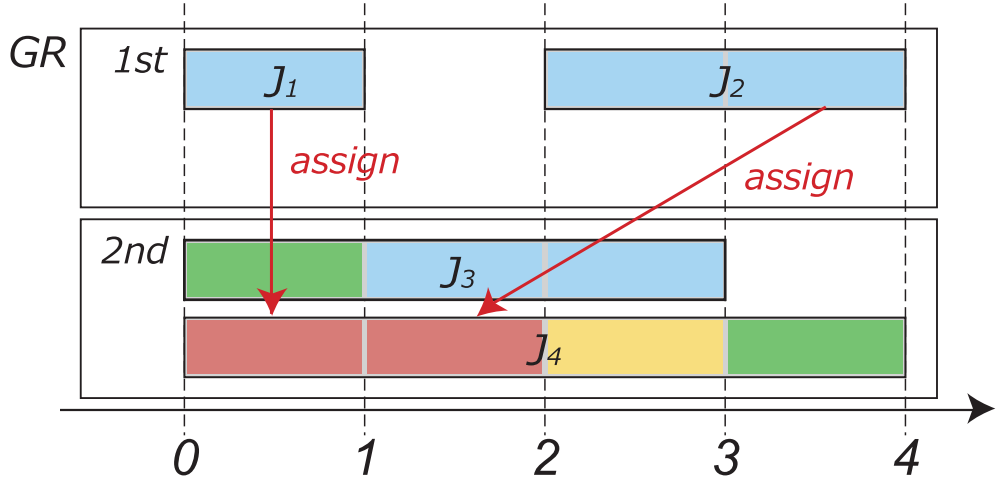}
	 \end{center}
	 \caption{
\ifnum \count10 > 0
%
%
%
\fi
\ifnum \count11 > 0
%
%
Assignment example 2 for general $m$.
\fi
			}
	\label{fig:fig_assex2}
\end{figure*}
\fi
\ifnum \count10 > 0
%
%
$J_{3}$
$J_{1}$
$J_{4}$
$J_{3}$
%
$J_{5}$
$J_{6}$
\fi
\ifnum \count11 > 0
%
%
In Fig.~\ref{fig:fig_assex3}, 
the routine executes Case~2.1 assigning 
the $c$-interval $[1, 2]$ of $J_{1}$ 
to 
the $oe$-interval $[0, 1]$ of $J_{3}$, 
and 
assigning 
the $c$-interval $[1, 2]$ of $J_{3}$ 
to 
the $oe$-interval $[1, 2]$ of $J_{4}$. 
Then, 
the routine executes Case~2.2 and 
assigns 
the $c$-interval $[2, 3]$ of $J_{6}$ 
to 
the $oe$-interval $[1, 2]$ of $J_{5}$. 
\fi
\ifnum \count12 > 0
\begin{figure*}[ht]
	 \begin{center}
	  \includegraphics[width=130mm]{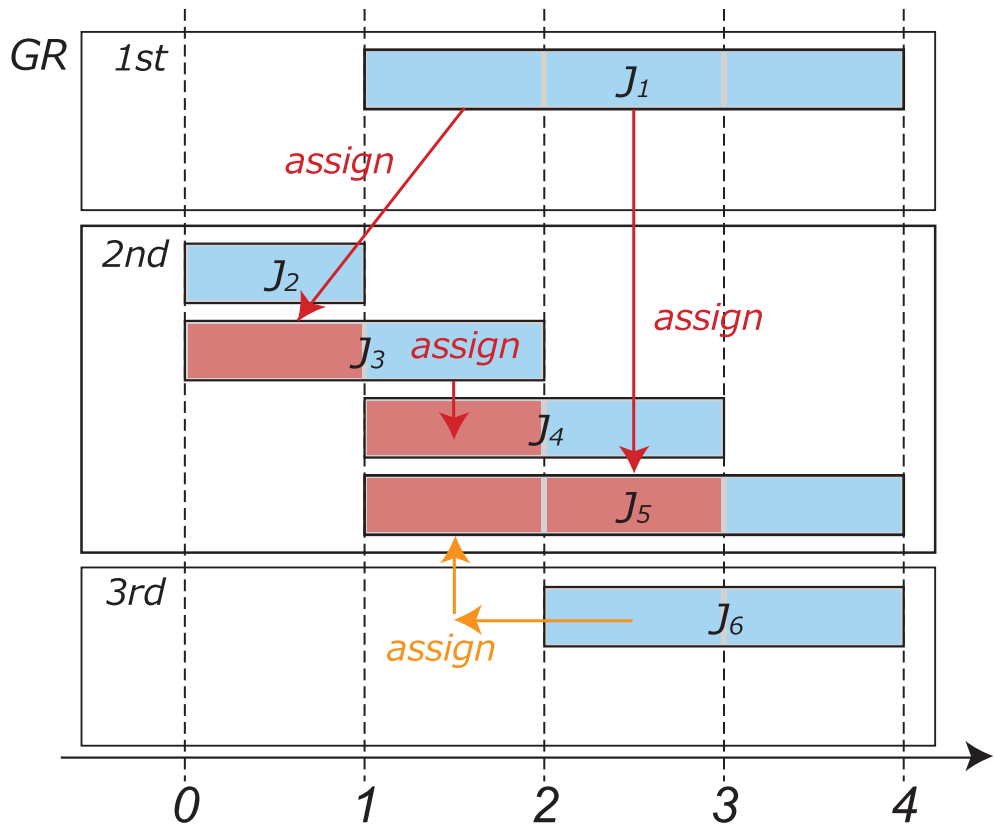}
	 \end{center}
	 \caption{
\ifnum \count10 > 0
%
%
%
%
\fi
\ifnum \count11 > 0
%
%
Assignment example 3 for general $m$.
\fi
			}
	\label{fig:fig_assex3}
\end{figure*}
\fi
\subsection{$m=2$} \label{ap.sec:m2}
\ifnum \count10 > 0
%
%
$p$-fraction $f'$
$p$-fraction $f_{1}$
$p$-fraction $f_{2}$
%
$J_{4}$
$J_{2}$
$J_{3}$
$J_{4}$
%
$J_{5}$
$J_{4}$
\fi
\ifnum \count11 > 0
%
%
In Fig.~\ref{fig:fig_assexm2}, 
the $c$-interval $[1, 2]$ of $J_{2}$ contains the $p$-fraction $f'$ called in the routine. 
Also, 
the $c$-interval $[2, 3]$ of $J_{3}$ 
and 
the $ge$-interval $[2, 3]$ of $J_{4}$ 
contain 
the $p$-fraction $f_{1}$ 
and 
the $p$-fraction $f_{2}$, 
respectively, called in the routine. 
The routine executes Case~2.2 and 
assigns 
the $c$-interval $[1, 2]$ of $J_{2}$, 
the $c$-interval $[2, 3]$ of $J_{3}$ and 
the $ge$-interval $[2, 3]$ of $J_{4}$ 
to 
the $oe$-interval $[1, 2]$ of $J_{4}$. 
Then, 
the routine executes Case~2.1 and 
assigns 
the $ge$-interval $[2, 3]$ of $J_{4}$  
to 
the $oe$-interval $[2, 3]$ of $J_{5}$. 
\fi
\ifnum \count12 > 0
\begin{figure*}[ht]
	 \begin{center}
	  \includegraphics[width=130mm]{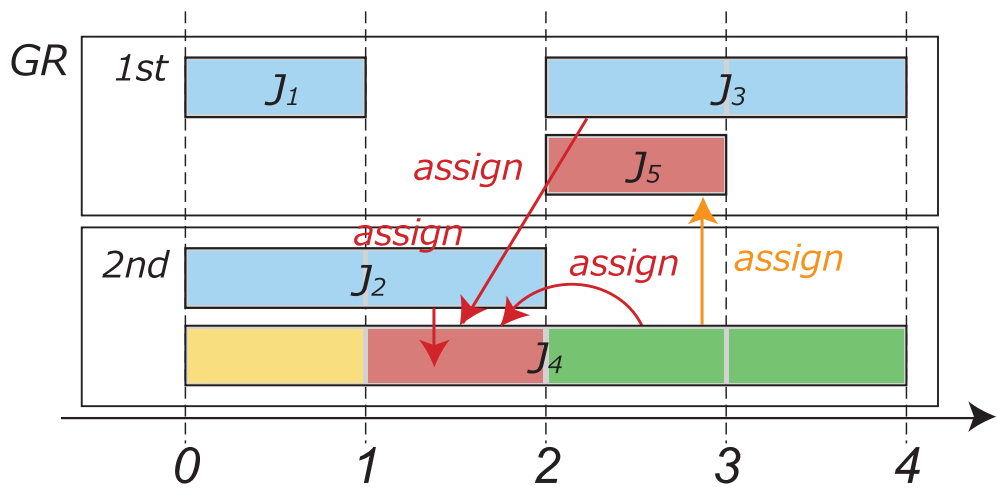}
	 \end{center}
	 \caption{
\ifnum \count10 > 0
%
%
%
\fi
\ifnum \count11 > 0
%
%
Assignment example for $m = 2$.
\fi
			}
	\label{fig:fig_assexm2}
\end{figure*}
\fi

\end{document}